\newif\ifconf\conffalse
\newif\ifdraft\draftfalse
\newclass{\CA}{CA}
\newclass{\SCA}{SCA}
\newclass{\TISP}{TISP}
\newlang{\MAJ}{MAJ}
\newlang{\MCSP}{MCSP}
\newlang{\Merge}{Merge}
\newlang{\PAR}{PAR}
\newlang{\THR}{THR}
\DeclareMathOperator{\nndcl}{nndcl}
\DeclareMathOperator{\nndcr}{nndcr}
\DeclareMathOperator{\trta}{tt}
\newcommand{\Block}{\mathsf{Block}}
\newcommand{\B}{\mathfrak{B}}
\newcommand{\Shr}{\mathrm{S}}
\newcommand{\domega}{\omega\omega}
\newcommand{\owComm}{\mathfrak{C}_{\mathrm{ow}}}
\newlength{\mytopsep}\setlength{\mytopsep}{2.75pt}
\colorlet{twnotecolor}{blue!9!white}
\newlength{\rowspc}
\newlength{\lblspc}
\title{Lower Bounds and Hardness Magnification for Sublinear-Time Shrinking
  Cellular Automata}
\newcommand{\myname}{Augusto Modanese}
\newcommand{\myaffil}{Karlsruhe Institute of Technology (KIT), Germany}
\newcommand{\myemail}{modanese@kit.edu}
\newcommand{\myack}{%
  I would like to thank Thomas Worsch for the helpful discussions and feedback.
}
\author{\myname}
\institute{\myaffil \\ \email{\myemail}}
\author{\myname \\ \\ \normalsize{\myaffil} \\ \normalsize{\texttt{\myemail}}}
\date{}
\begin{document}

\maketitle

\begin{abstract}
  The minimum circuit size problem (MCSP) is a string compression problem with a
  parameter $s$ in which, given the truth table of a Boolean function over
  inputs of length $n$, one must answer whether it can be computed by a Boolean
  circuit of size at most $s(n) \ge n$.
  Recently, \citeauthor{mckay19_weak_stoc} (STOC, \citeyear{mckay19_weak_stoc})
  proved a hardness magnification result for MCSP involving (one-pass) streaming
  algorithms:
  For any reasonable $s$, if there is no $\poly(s(n))$-space streaming algorithm
  with $\poly(s(n))$ update time for $\MCSP[s]$, then $\P \neq \NP$.
  We prove an analogous result for the (provably) strictly less capable model
  of shrinking cellular automata (SCAs), which are cellular automata whose cells
  can spontaneously delete themselves.
  We show every language accepted by an SCA can also be accepted by a streaming
  algorithm of similar complexity, and we identify two different aspects in
  which SCAs are more restricted than streaming algorithms.
  We also show there is a language which cannot be accepted by any SCA in
  $o(n / \log n)$ time, even though it admits an $O(\log n)$-space streaming
  algorithm with $O(\log n)$ update time.
  \ifconf
  \keywords{Cellular automata \and Hardness magnification \and Minimum circuit
  size problem \and Streaming algorithms \and Sublinear-time computation}
  \fi
\end{abstract}


\section{Introduction}

The ongoing quest for lower bounds in complexity theory has been an arduous but
by no means unfruitful one.
Recent developments have brought to light a phenomenon dubbed \emph{hardness
magnification} \cite{oliveira18_hardness_focs, chen19_hardness_focs,
mckay19_weak_stoc, chen20_beyond_itcs, oliveira19_hardness_ccc,
cheraghchi20_one-tape_eccc}, giving several examples of natural problems for
which even slightly non-trivial lower bounds are as hard to prove as major
complexity class separations such as $\P \neq \NP$.
Among these, the preeminent example appears to be the \emph{minimum circuit
size problem}:

\begin{definition}[MCSP]
  For a Boolean function $f\colon \binalph^n \to \binalph$, let $\trta(f)$
  denote the truth table representation of $f$ (as a binary string in
  $\binalph^+$ of length $\abs{\trta(f)} = 2^n$).
  For $s\colon \N_+ \to \N_+$, the \emph{minimum circuit size problem}
  $\MCSP[s]$ is the problem where, given such a truth table $\trta(f)$, one must
  answer whether there is a Boolean circuit $C$ on inputs of length $n$ and size
  at most $s(n)$ that computes $f$, that is, $C(x) = f(x)$ for every input
  $x \in \binalph^n$.
\end{definition}

It is a well-known fact that there is a constant $K > 0$ such that, for any
function $f$ on $n$ variables as above, there is a circuit of size at most
$K \cdot 2^n / n$ that computes $f$; hence, $\MCSP[s]$ is only non-trivial for
$s(n) < K \cdot 2^n / n$.
Furthermore, $\MCSP[s] \in \NP$ for any constructible $s$ and, since every
circuit of size at most $s(n)$ can be described by a binary string of
$O(s(n) \log s(n))$ length, if $2^{O(s(n) \log s(n))} \subseteq \poly(2^n)$
(e.g., $s(n) \in O(n / \log n)$), by enumerating all possibilities we have
$\MCSP[s] \in \P$.
(Of course, such a bound is hardly useful since $s(n) \in O(n / \log n)$ implies
the circuit is degenerate and can only read a strict subset of its inputs.)
For large enough $s(n) < K \cdot 2^n / n$ (e.g., $s(n) \ge n$), it is unclear
whether $\MCSP[s]$ is $\NP$-complete (under polynomial-time many-one
reductions); see also \cite{kabanets00_circuit_stoc, murray17_np-hardness_toc}.
Still, we remark there has been some recent progress regarding
$\NP$-completeness under \emph{randomized} many-one reductions for \emph{certain
variants} of MCSP \cite{ilango20_np-hardness_ccc}.

\citeauthor{oliveira18_hardness_focs} \cite{oliveira18_hardness_focs} and
\citeauthor{oliveira19_hardness_ccc} \cite{oliveira19_hardness_ccc} recently
analyzed hardness magnification in the average-case as well as in the worst-case
approximation (i.e., gap) settings of MCSP for various (uniform and non-uniform)
computational models.
Meanwhile, \citeauthor{mckay19_weak_stoc} \cite{mckay19_weak_stoc} showed
similar results hold in the standard (i.e., exact or gapless) worst-case setting
and proved the following magnification result for (single-pass) \emph{streaming
algorithms} (see Definition~\ref{def_stream_alg}), which is a very restricted
uniform model; indeed, as mentioned in \cite{mckay19_weak_stoc}, even string
equality (i.e., the problem of recognizing $\{ ww \mid w \in \binalph^+ \}$)
cannot be solved by streaming algorithms (with limited space).

\begin{theorem}[\cite{mckay19_weak_stoc}] \label{thm_mmw19}
  Let $s\colon \N_+ \to \N_+$ be time constructible and $s(n) \ge n$.
  If there is no $\poly(s(n))$-space streaming algorithm with $\poly(s(n))$
  update time for (the search version of) $\MCSP[s]$, then $\P \neq \NP$.
\end{theorem}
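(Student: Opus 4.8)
The plan is to prove the contrapositive: assuming $\P = \NP$, and hence that the polynomial-time hierarchy collapses to $\P$, I would build a streaming algorithm (in the sense of Definition~\ref{def_stream_alg}) for the search version of $\MCSP[s]$ that uses $\poly(s(n))$ space and $\poly(s(n))$ update time. Throughout, view the input $\trta(f) \in \binalph^{2^n}$ as being read left to right under the identification of $\binalph^n$ with $\{0, 1, \ldots, 2^n - 1\}$ in the natural order, so that after $x$ steps the algorithm has seen $f(0), f(1), \ldots, f(x-1)$; I also assume, as is standard, that the algorithm is told $n$ at the outset.

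The key idea is that the algorithm never stores the prefix it has read. Its entire state is the lexicographically least circuit $C$ of size at most $s(n)$ that is \emph{consistent} with the prefix read so far (i.e.\ $C(i) = f(i)$ for all $i < x$), together with the counter $x$ --- or a flag $\bot$ if no such circuit exists. The point is that such a $C$ is itself a faithful encoding of the prefix, since $f(i) = C(i)$ for every $i < x$; thus, although the raw bits are discarded, the algorithm can still reason about all of them at once through $C$.

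For the update upon reading $f(x)$: first evaluate $C$ on input $x$ ($\poly(s(n))$ time). If $C(x) = f(x)$, keep $C$; a short argument --- adding a constraint only shrinks the set of consistent circuits, and $C$ still lies in it --- shows $C$ remains the lexicographically least size-$s(n)$ circuit consistent with the longer prefix. If $C(x) \neq f(x)$, recompute: the new state is the lexicographically least size-$s(n)$ circuit $C'$ that agrees with $C$ on all inputs $< x$ and has $C'(x) = f(x)$, or $\bot$ if none exists (thereafter the algorithm stays at $\bot$). The crucial point is that ``$C'$ agrees with $C$ on all inputs $< x$'' is a co-$\NP$ predicate in $(C', C, x)$ --- its negation merely asserts the existence of one input $< x$ on which $C$ and $C'$ differ --- so under $\P = \NP$ it is decidable in time $\poly(s(n), n)$, which is $\poly(s(n))$ since $s(n) \ge n$. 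Therefore finding $C'$ by prefix search over its $O(s(n) \log s(n))$-bit description reduces to polynomially many $\Sigma_2^p$ queries, each answerable in $\poly(s(n))$ time under $\P = \NP$, so the whole update runs in $\poly(s(n))$ time and space; the state itself occupies $O(s(n) \log s(n)) + O(n)$ bits.

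Correctness is then immediate: if $f$ has a circuit of size $s(n)$, that circuit is a valid candidate in every recomputation, so the state never reaches $\bot$ and the final state is a size-$s(n)$ circuit consistent with the whole truth table, i.e.\ a correct answer for the search problem; conversely, if the state ever reaches $\bot$, then already some prefix of $\trta(f)$ does not extend to a size-$s(n)$ circuit, so $f$ has none, and reporting this is correct. I expect the main obstacle to be precisely the space constraint that makes the statement nontrivial: a streaming algorithm can afford neither to store the prefix nor to store the (up to $2^{\Theta(s \log s)}$ many) positions at which its hypothesis was corrected, and the way around this is exactly the combination of (i) encoding the prefix \emph{implicitly} inside the hypothesis circuit and (ii) using the collapse of the polynomial hierarchy to perform the consistency checks without ever enumerating the prefix.
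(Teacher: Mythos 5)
Your proof is correct, but it takes a genuinely different route from the one behind this theorem. (The paper itself only cites Theorem~\ref{thm_mmw19}; the relevant comparison is with the original streaming argument of \cite{mckay19_weak_stoc} and with the paper's adaptation of it to SCAs in Theorem~\ref{thm_sca_hm_explicit}.) That argument is divide-and-conquer: one maintains a small collection of circuits, each computing $f$ on an interval of the truth table, and repeatedly combines circuits on adjacent intervals via the $\Sigma_2^p$ search problem $\Merge[s]$, so the whole computation is a balanced tree of \emph{local} merge steps, each touching only two circuit descriptions and two interval endpoints. You instead maintain a single hypothesis---the lexicographically least size-$s(n)$ circuit consistent with the prefix read so far---and repair it by prefix search whenever it errs, using the $\Sigma_2^p$ predicate asserting the existence of a circuit $C'$ (with a prescribed description prefix) satisfying $C'(i)=C(i)$ for all $i<x$ and $C'(x)=f(x)$. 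Your invariant is sound: adding a constraint only shrinks the set of consistent circuits, so an agreeing hypothesis remains lexicographically least, and a repaired hypothesis faithfully re-encodes the whole prefix because the old one did; space and update time are $\poly(s(n))$ as claimed. What the merge route buys---and the reason the paper needs it---is the locality of the combining step, which is what allows the construction to be carried out by blockwise reductions on an SCA (and which \cite{chen20_beyond_itcs} isolate as the locality barrier); your update queries a predicate whose universal quantifier ranges over the entire prefix at once, which is harmless for a RAM-based streaming algorithm but would not port to the setting of Section~\ref{sec_sca_hm}. What your route buys is a leaner state (one circuit and a counter, rather than a stack of circuits for dyadic blocks) and a correctness argument free of tree bookkeeping. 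The caveats you flag (the algorithm must be told $n$; \emph{lexicographically least} should be read over descriptions) are standard and harmless.
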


In this paper, we present the following hardness magnification result for a
(uniform) computational model which is provably \emph{even more restricted} than
streaming algorithms: \emph{shrinking cellular automata} (SCAs).
Here, $\Block_b$ refers to a slightly modified presentation of $\MCSP[s]$ that
is only needed due to certain limitations of the model (see further discussion
as well as Section~\ref{sec_block_langs}).

\begin{restatable}{theorem}{restatethmscahm} \label{thm_sca_hm}
  For a certain $m \in \poly(s(n))$, if $\Block_b(\MCSP[s]) \not\in \SCA[n
  \cdot f(m)]$ for every $f \in \poly(m)$ and $b \in O(f)$, then $\P \neq \NP$.
\end{restatable}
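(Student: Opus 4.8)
The plan is to prove the contrapositive: assuming $\P = \NP$, I exhibit, for a suitable $m \in \poly(s(n))$, a polynomial $f \in \poly(m)$ and a parameter $b \in O(f)$ with $\Block_b(\MCSP[s]) \in \SCA[n \cdot f(m)]$. The engine is Theorem~\ref{thm_mmw19}: read contrapositively, $\P = \NP$ supplies a streaming algorithm $\mathcal{S}$ for (the search version of) $\MCSP[s]$ with space and update time bounded by $p(s(n))$ for some fixed polynomial $p$; set $m := p(s(n)) \in \poly(s(n))$. It helps to recall the shape of such an $\mathcal{S}$, since that is exactly what the SCA must carry out. As a size-$s(n)$ circuit is described by $O(s(n)\log s(n))$ bits, $\mathcal{S}$ keeps only a single candidate description $d$ --- the lexicographically least one consistent with the truth-table prefix read so far --- and, crucially, never stores that exponentially long prefix: any two circuits consistent with a common prefix agree with one another on it, so after a mismatch at position $i$ the next candidate is pinned down by $d$, $i$, and the freshly read bit alone, and locating it is a question in the second level of the polynomial hierarchy over objects of length $\poly(m)$, hence decidable in time $\poly(m)$ because $\P = \NP$ collapses that hierarchy to $\P$. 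When the stream ends, the $m$-bit memory of $\mathcal{S}$ holds either a circuit computing $f$ or a failure flag, which decides $\MCSP[s]$.

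The substance of the proof is a simulation of $\mathcal{S}$ by an SCA, and the one feature of SCAs that makes this delicate is that cells may be deleted but never created: all $m = \poly(s(n))$ cells of working memory that $\mathcal{S}$ consumes must already be present in the initial configuration. This is precisely the job of $\Block_b$, whose parameter $b$ controls how much room is inserted alongside each input symbol; I take $f \in \poly(m)$ with $f(m)$ exceeding a suitable constant multiple of $p(m)$, and $b := f(m)$, so that $\Block_b$ replaces each truth-table bit $T_i$ by a width-$b$ block holding $T_i$ followed by blank workspace. The SCA then runs one left-to-right sweep: a head region of $O(p(s(n))) \subseteq O(b)$ cells carries the state of $\mathcal{S}$; on reaching the block of $T_i$ it spends $O(b)$ steps performing one update step of $\mathcal{S}$ (using the adjacent blanks as scratch and an internal counter to recover $i$), writes back the updated state, and then deletes the spent block --- the single place the shrinking is used, and the reason $b$ must dominate the update cost $p(m)$. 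Once every block has been consumed the configuration has contracted to the head region, which now holds the final memory of $\mathcal{S}$, and the SCA accepts iff that is not the failure flag. The sweep performs $\poly(m)$ steps per truth-table symbol, hence at most $n \cdot f(m)$ steps in all, with $n$ the length of the $\Block_b$-preprocessed input.

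I expect the main obstacle to be the cellular-automaton engineering of this sweep under strict locality and delete-only dynamics: synchronizing the $O(b)$-step local computation with the one-block-per-phase advance of the head, shuttling the $O(b)$-bit updated state across the head region between phases, and dismantling each consumed block cleanly without perturbing the cells to its right, all with a fixed finite alphabet. These are also exactly the points where the specific respects in which SCAs are weaker than streaming algorithms must be absorbed --- into the $\Block_b$ presentation and into the slack provided by the two parameters $f$ and $b$. A secondary point to check is that the streaming algorithm furnished by Theorem~\ref{thm_mmw19} really does use space bounded purely in $s(n)$, independent of the truth-table length --- which, once more, is exactly what the ``no stored prefix'' device above buys.
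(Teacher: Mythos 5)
There is a genuine gap, and it is quantitative but fatal: you have misread the time bound. In the statement, $n$ is the number of variables of the Boolean function, so the truth table has $2^n$ bits, the $\Block_b$ instance has length $N \in \Theta(2^n \cdot b(n))$, and the required running time $n \cdot f(m) = n \cdot \poly(s(n))$ is \emph{polylogarithmic} in $N$ (times $\poly(s(n))$), not linear. Your construction is a single left-to-right sweep in which a head region visits each of the $2^n$ blocks sequentially and spends $\Omega(1)$ (in fact $\Omega(b)$) steps per block; this takes $\Omega(2^n)$ time no matter how the shrinking is arranged, which exceeds $n \cdot f(m)$ exponentially whenever $s(n)$ is small --- precisely the regime where the magnification statement has content. (Your closing sentence quietly redefines $n$ as ``the length of the $\Block_b$-preprocessed input,'' but that reading is inconsistent with Theorem~\ref{thm_sca_hm_explicit}, which states the bound $O(n \cdot f(m))$ alongside the instance size $\Theta(2^n \cdot b(n))$, and it would also gut the theorem as a magnification result.) Relatedly, ``simulate the streaming algorithm'' cannot be the right black-box strategy: the paper proves (Theorem~\ref{thm_linear_block_lang}) that there are block languages with $O(\log N)$-space, $\tilde{O}(\log N)$-update streaming algorithms that \emph{no} SCA accepts in $o(N/\log N)$ time, so SCAs cannot in general absorb a streaming computation into the $\Block_b$ slack.

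The missing idea is the paper's parallel, logarithmic-depth scheme. Under $\P = \NP$ one gets a polynomial-time (hence in-place CA) algorithm for the local problem $\Merge[s]$ --- merging \emph{two} size-$s$ circuits that cover adjacent intervals of inputs into one --- because its decision version sits in $\Sigma_2^p$ and has instances of length $m \in \poly(s(n))$. The SCA first maps each truth-table bit to a trivial constant circuit ($1$-blockwise reduction to $\Merge_{2^n}[s]$), then repeatedly applies a $2$-blockwise reduction that merges \emph{all} adjacent pairs of circuits simultaneously, halving the number of blocks per round. Only $n = \log(2^n)$ rounds are needed, each costing $f(m)$ time by Lemma~\ref{lem_bw_reduction_chain}, which is how the $O(n \cdot f(m))$ bound arises; the block length $b \approx g(m)$ exists to provide workspace for the in-place $\Merge[s]$ computation, not to host a travelling head. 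Your one-candidate streaming sketch is essentially a sequentialization of this same merge, so the local ingredient is right, but without the parallel pairwise merging and the $\log$-depth recursion the theorem as stated is not reached.
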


Furthermore, we show every language accepted by a sublinear-time SCA can also be
accepted by a streaming algorithm of comparable complexity:

\begin{restatable}{theorem}{restatethmscatostreaming}%
  \label{thm_sca_to_streaming}%
  Let $t\colon \N_+ \to \N_+$ be computable by an $O(t)$-space random access
  machine (as in Definition~\ref{def_stream_alg}) in $O(t \log t)$ time.
  Then, if $L \in \SCA[t]$, there is an $O(t)$-space streaming algorithm for $L$
  with $O(t \log t)$ update and $O(t^2 \log t)$ reporting time.
\end{restatable}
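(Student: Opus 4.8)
The plan is to simulate the SCA \enquote{diagonally}, producing the cells of its space-time diagram in the same left-to-right order in which the input arrives. Concretely, I would have the streaming algorithm run a pipeline of (at most $t+1$) stages, where stage $j$ is responsible for emitting the configuration $c_j$ of the SCA at time $j$ in left-to-right order: stage $0$ just relays the input symbols, and stage $j$ applies the local rule to the cells emitted by stage $j-1$. Since a one-dimensional CA rule has constant radius, each stage only needs to buffer $O(1)$ cell states of the configuration it is reading. The delicate point is the shrinking: I would handle it by letting stage $j$ keep the last cell it has emitted \enquote{pending} whenever that cell is still waiting for its next surviving right neighbour in $c_{j-1}$, together with a flag recording that it is currently skipping over a run of deleted cells; a pending cell is finalized and emitted as soon as that neighbour (or the end-marker) shows up. When a new input symbol arrives, stage $0$ emits it, which may allow stage $1$ to emit one more cell, cascading upward through the stages; each cascade step is an $O(1)$ rule evaluation plus $O(\log t)$ bookkeeping (step counters and the like), giving $O(t\log t)$ update time. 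Acceptance is detected by watching the leftmost emitted cell of each stage (or, depending on the exact acceptance convention, all emitted cells) for a halting state.

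When the stream ends, the algorithm finally learns $n$, computes $t=t(n)$ in $O(t)$ space and $O(t\log t)$ time using the hypothesis on $t$, and then drains the pipeline by feeding an end-marker through the stages: this resolves every pending cell and lets each stage finish emitting its configuration. Draining takes $O(t)$ rounds of a full cascade through the $O(t)$ stages, again $O(\log t)$ per stage, for $O(t^2\log t)$ reporting time; if no halting state has appeared through stage $t$, the algorithm rejects (or accepts) according to the model's default. For correctness I would argue by induction on $j$ that stage $j$ emits exactly the cells of $c_j$, in order, and does so as soon as it has read enough of $c_{j-1}$; the only thing to check carefully is that the pending-cell mechanism reproduces the SCA's neighbour semantics at a deletion event exactly, i.e. whether a deleting cell still influences its neighbours during the step in which it disappears — this is a convention one fixes and then mirrors faithfully.

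The step I expect to be the main obstacle is the $O(t)$ space bound. A priori, after only a prefix of the input has been read, the number of stages that have already emitted a cell could be large — in the worst case it grows with the amount of input consumed rather than with $t$ — so one must rule this out for SCAs that halt within $t$ steps: a stage $j$ can become active only once the simulated computation has genuinely advanced $j$ steps, and since the whole computation reaches a halting configuration by step $t(n)$ (and halting configurations propagate unchanged, so all stages past the halting time collapse into one), at most $O(t)$ stages are ever simultaneously active. Making this precise — in particular reconciling the per-prefix simulation with the fact that a one-pass streaming algorithm does not know $n$, hence not $t(n)$, until the stream ends — is exactly where the shrinking dynamics of SCAs (as opposed to those of general CAs, which need not shrink at all) are essential, and is the part I would devote the most attention to.
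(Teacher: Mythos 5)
Your overall strategy --- a diagonal, pipelined simulation of the space-time diagram keeping only $O(1)$ cell states per time step, with a cascade up the stages on each new input symbol, a ``pending cell'' mechanism that waits for the next surviving right neighbour after a deletion, and a draining phase with quiescent end-markers --- is essentially the algorithm the paper uses (there each stage $j$ is realized as the triple $\mathtt{leftState}[j]$, $\mathtt{centerState}[j]$, $\mathtt{rightState}$, and your pending-cell mechanism is the \KwSty{goto} that pulls further input when a computed state equals $\otimes$). The update- and reporting-time accounting also matches.

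The genuine gap is exactly where you predicted it: the $O(t)$ space bound. Your proposed rescue --- that at most $O(t)$ stages are ever simultaneously active because ``the whole computation reaches a halting configuration by step $t(n)$ and halting configurations propagate unchanged'' --- does not work in this model. An SCA has no halting configurations; acceptance is only the event that cell zero enters an accepting state, the time bound $t$ is only guaranteed for $w \in L(S)$, and for $w \notin L(S)$ the configurations need never stabilize. Worse, in your \emph{eager} pipeline a new stage becomes active as soon as the stage below has emitted a couple of cells, so on an input with no deletions roughly $k/2$ stages are live after $k$ symbols have been read --- i.e.\ $\Theta(N)$ stages, not $O(t)$, whenever $t(N) \in o(N)$. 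The paper avoids this entirely by working in the other direction: the algorithm computes $t(\abs{w})$ \emph{before} reading the stream (the streaming model here does grant the algorithm the input length, which is also why the hypothesis asks for $t$ to be computable by an $O(t)$-space RAM), caps the number of simulated time steps --- hence stages --- at $t(\abs{w})$, and is driven top-down by the need to determine the state of the leftmost active cell at each step $\tau + 1 \le t(\abs{w})$, reading a new input symbol only when a deletion forces it to look further right. If you adopt that cap (and drop the assumption that $n$ is unknown until the stream ends), your construction goes through; without it, the space bound fails.
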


Finally, we identify and prove \emph{two distinct limitations} of SCAs compared
to streaming algorithms (under sublinear-time constraints):
\begin{enumerate}
  \item They are insensitive to the length of long unary substrings in their
    input (Lemma~\ref{lem_pumping_SCA}), which means (standard versions of)
    fundamental problems such as parity, modulo, majority, and threshold cannot
    be solved in sublinear time (Proposition~\ref{prop_parity} and
    Corollary~\ref{cor_mod_maj_thr}).
  \item Only a limited amount of information can be transferred between cells
    which are far apart (in the sense of one-way communication complexity; see
    Lemma~\ref{lem_comm_compl_sca}).
\end{enumerate}
Both limitations are inherited from the underlying model of cellular automata.
The first can be avoided by presenting the input in a special format (the
previously mentioned $\Block_n$) that is efficiently verifiable by SCAs, which
we motivate and adopt as part of the model (see the discussion below).
The second is more dramatic and results in lower bounds even for languages
presented in this format:

\begin{restatable}{theorem}{restatethmlinearblocklang}%
  \label{thm_linear_block_lang}%
  There is a language $L_1$ for which $\Block_n(L_1) \not\in \SCA[o(N / \log
  N)]$ ($N$ being the instance length) can be accepted by an $O(\log N)$-space
  streaming algorithm with $\tilde{O}(\log N)$ update time.
\end{restatable}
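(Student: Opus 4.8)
The plan is to take $L_1$ to be (a block-format encoding of) the \emph{indexing problem}: an instance consists of a short ``query'' $i \in \binalph^k$ followed by a ``database'' $x \in \binalph^{2^k}$, and it is accepted exactly when the bit of $x$ at the position encoded by $i$ equals $1$. This choice is tailored so that the two demanded properties pull apart precisely along the fault line exposed by the second limitation of SCAs: reading the input left to right, a streaming algorithm meets the short query before the long database and can afford to remember $i$ and then merely watch a position counter, whereas an SCA must eventually report a verdict that depends on a single far-away bit of $x$, and---this is the whole point---deleting the intervening cells cannot shortcut this, since by Lemma~\ref{lem_comm_compl_sca} the information crossing a fixed cut in $t$ steps is only $O(t)$ no matter how the cells on either side evolve or vanish.

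For the streaming upper bound I would spell out the easy direction: an $O(\log N)$-space streaming algorithm for $\Block_n(L_1)$ parses the initial query block, stores the encoded position together with a block counter, and as it scans the database blocks it compares each block's index against $i$ and latches the associated bit on a match; parsing a block and performing the comparison costs $\tilde{O}(\log N)$ per symbol, and the verdict is the latched bit. The block format changes only the bookkeeping, not the asymptotics.

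For the SCA lower bound, suppose toward a contradiction that $\Block_n(L_1) \in \SCA[t]$ with $t \in o(N / \log N)$. Fixing the cut that separates the query block (say, on the left) from the database blocks (on the right), I would apply Lemma~\ref{lem_comm_compl_sca} to extract from the SCA a one-way protocol across this cut in which the database side is the sender, the query side---where the verdict surfaces---is the receiver, and the communication is bounded in terms of $t$. But this is exactly the hard direction of indexing: the sender holds all of $x$ yet not $i$, so it must transmit $\Omega(m)$ bits on an $m$-bit database. A routine accounting of the block-format overhead (which governs the gap between $m$ and $N$) then turns this into $t \in \Omega(N / \log N)$, contradicting $t \in o(N / \log N)$.

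The delicate part is not the (classical) one-way communication bound for indexing but ruling out that an SCA circumvents it by abusing deletion---e.g., by ``navigating'' to the relevant database block, erasing everything else, and so collapsing the distance the crucial bit must travel. This is exactly why the argument has to go through Lemma~\ref{lem_comm_compl_sca} rather than a naive light-cone bound: one has to verify that its cross-cut accounting is genuinely robust to cells disappearing on either side, and that the cut can be fixed so that, whatever the SCA deletes, the verdict still ends up on the query side and the database side is thereby forced to be the sender. Checking that the block-format overhead lines up with the $\Theta(\log N)$ factor in the statement is then a routine parameter count.
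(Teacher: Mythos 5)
Your proposal is correct and follows essentially the same route as the paper: the paper's $L_1$ is exactly the indexing problem (with the index read off the first $n$ bits of the word itself), the cut is placed after the first $n$ blocks so that the receiver $A$ holds the query and the sender $B$ holds the database, and the lower bound is the classical $2^n - n$ one-way communication bound for indexing (proved in the paper by the standard counting argument), combined with Lemma~\ref{lem_comm_compl_sca} and the observation that $N \in \Theta(n \cdot 2^n)$. The deletion-robustness concern you flag is precisely what Lemma~\ref{lem_comm_compl_sca} already handles, so no further work is needed there.
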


From the above, it follows that any proof of $\P \neq \NP$ based on a lower
bound for solving $\MCSP[s]$ with streaming algorithms and
Theorem~\ref{thm_mmw19} must implicitly contain a proof of a lower bound for
solving $\MCSP[s]$ with SCAs.
From a more \enquote{optimistic} perspective (with an eventual proof of
$\P \neq \NP$ in mind), although not as widely studied as streaming algorithms,
SCAs are thus at least as good as a \enquote{target} for proving lower bounds
against and, in fact, should be an easier one if we are able to exploit
their aforementioned limitations.
Refer to Section~\ref{sec_conclusion} for further discussion on this, where we
take into account a recently proposed barrier \cite{chen20_beyond_itcs} to
existing techniques and which also applies to our proof of
Theorem~\ref{thm_linear_block_lang}.

From the perspective of cellular automata theory, our work furthers knowledge in
sublinear-time cellular automata models, a topic seemingly neglected by the
community at large (as pointed out in, e.g.,
\cite{modanese20_sublinear-time_dlt}).
Although this is certainly not the first result in which complexity-theoretical
results for cellular automata and their variants have consequences for classical
models (see, e.g., \cite{kutrib14_complexity_automata, poupet07_padding_csr}
for results in this sense), to the best of our knowledge said results address
only \emph{necessary} conditions for separating classical complexity classes.
Hence, our result is also novel in providing an implication in the other
direction, that is, a \emph{sufficient} condition for said separations based on
lower bounds for cellular automata models.

\ifconf
\subsubsection{The Model.}
\else
\subsection{The Model}
\fi

(One-dimensional) cellular automata (CAs) are a parallel computational model
composed of identical \emph{cells} arranged in an array.
Each cell operates as a deterministic finite automaton (DFA) that is connected
with its left and right neighbors and operates according to the same local rule.
In classical CAs, the cell structure is immutable; \emph{shrinking} CAs relax
the model in that regard by allowing cells to spontaneously vanish (with their
contents being irrecoverably lost).
The array structure is conserved by reconnecting every cell with deleted
neighbors to the nearest non-deleted ones in either direction.

SCAs were introduced by \citeauthor{rosenfeld83_fast_is} in
\citeyear{rosenfeld83_fast_is} \cite{rosenfeld83_fast_is}, but it was not
until recent years that the model received greater attention by the CA community
\cite{kutrib15_shrinking_automata, modanese16_shrinking_automata}.
SCAs are a natural and robust model of parallel computation which, unlike
classical CAs, admit (non-trivial) sublinear-time computations.

We give a brief intuition as to how shrinking augments the classical CA model in
a significant way.
Intuitively speaking, any two cells in a CA can only communicate by signals,
which necessarily requires time proportional to the distance between them.
Assuming the entire input is relevant towards acceptance, this imposes a linear
lower bound on the time complexity of the CA.
In SCAs, however, this distance can be shortened as the computation evolves,
thus rendering acceptance in sublinear time possible.
As a matter of fact, the more cells are deleted, the faster distant cells can
communicate and the computation can evolve.
This results in a trade-off between space (i.e., cells containing information)
and time (i.e., amount of cells deleted).

\paragraph{Comparison with Related Models.}

Unlike other parallel models such as random access machines, SCAs are
\emph{incapable of random access} to their input.
In a similar sense, SCAs are constrained by the \emph{distance} between cells,
which is an aspect usually disregarded in circuits and related models except
perhaps for VLSI complexity \cite{thompson80_complexity_phd,
chazelle85_model_jacm}, for instance.
In contrast to VLSI circuits, however, in SCAs distance is a fluid aspect,
changing dynamically as the computation evolves.
Also of note is that SCAs are a \emph{local} computational model in a quite
literal sense of locality that is coupled with the above concept of distance
(instead of more abstract notions such as that from \cite{yao89_circuits_stoc},
for example).

These limitations hold not only for SCAs but also for standard CAs.
Nevertheless, SCAs are more powerful than other CA models capable of
sublinear-time computation such as ACAs \cite{modanese20_sublinear-time_dlt,
ibarra85_fast_tcs}, which are CAs with their acceptance behavior such that the
CA accepts if and only if all cells simultaneously accept.
This is because SCAs can \emph{efficiently aggregate results} computed in
parallel (by combining them using some efficiently computable function); in ACAs
any such form of aggregation is fairly limited as the underlying cell structure
is static.

\paragraph{Block Words.}

As mentioned above, there is an input format which allows us to circumvent the
first of the limitations of SCAs compared to streaming algorithms and which is
essential in order to obtain a more serious computational model.
In this format, the input is subdivided into \emph{blocks} of the same size and
which are separated by delimiters and numbered in ascending order from left to
right.
Words with this structure are dubbed \emph{block words} accordingly, and a
set of such words is a \emph{block language}.
There is a natural presentation of any (ordinary) word as a block word (by
mapping every symbol to its own block), which means there is a block language
version to any (ordinary) language.
(See Section~\ref{sec_block_langs}.)

The concept of block words seems to arise naturally in the context of
sublinear-time (both shrinking and standard) CAs
\cite{modanese20_sublinear-time_dlt, ibarra85_fast_tcs}.
The syntax of block words is very efficiently verifiable (more precisely, in
time linear in the block length) by a CA (without need of shrinking).
In addition, the translation of a language to its block version (and its
inverse) is a very simple map; one may frame it, for instance, as an $\AC^0$
reduction.
Hence, the difference between a language and its block version is solely in
presentation.

Block words coupled with CAs form a computational paradigm that appears to be
substantially diverse from linear- and real-time CA computation (see
\cite{modanese20_sublinear-time_dlt} for examples).
Often we shall describe operations on a block (rather than on a cell) level and,
by making use of block numbering, two blocks with distinct numbers may operate
differently even though their contents are the same; this would be impossible at
a cell level due to the locality of CA rules.
In combination with shrinking, certain block languages admit merging groups of
blocks in parallel; this gives rise to a form of reduction we call
\emph{blockwise reductions} and which we employ in a manner akin to downward
self-reducibility as in \cite{allender10_amplifying_jacm}.

An additional technicality which arises is that the number of cells in a block
is fixed at the start of the computation; this means a block cannot
\enquote{allocate extra space} (beyond a constant multiple of the block length).
This is the same limitation as that of linear bounded automata (LBAs) compared
to Turing machines with unbounded space, for example.
We cope with this limitation by increasing the block length in the problem
instances as needed, that is, by padding each block so that enough space is
available from the outset.\footnote{%
  An alternative solution is allowing the CA to \enquote{expand} by dynamically
  creating new cells between existing ones; however, this may result in a
  computational model which is dramatically more powerful than standard CAs
  \cite{modanese16_shrinking_automata,
    modanese19_complexity-theoretic_automata}.}
This is still in line with the considerations above; for instance, the resulting
language is still $\AC^0$ reducible to the original one (and vice-versa).

\ifconf
\subsubsection{Techniques.}
\else
\subsection{Techniques}
\fi

We give a broad overview of the proof ideas behind our results.

Theorem~\ref{thm_sca_hm} is a direct corollary of
Theorem~\ref{thm_sca_hm_explicit}, proven is Section~\ref{sec_sca_hm}.
The proof closely follows \cite{mckay19_weak_stoc} (see the discussion in
Section~\ref{sec_sca_hm} for a comparison) and, as mentioned above, bases on a
scheme similar to self-reducibility as in \cite{allender10_amplifying_jacm}.

The lower bounds in Section~\ref{sec_parity_etc} are established using
Lemma~\ref{lem_pumping_SCA}, which is a generic technical limitation of
sublinear-time models based on CAs (the first of the two aforementioned
limitations of SCAs with respect to streaming algorithms) and which we also show
to hold for SCAs.

One of the main technical highlights is the proof of
Theorem~\ref{thm_sca_to_streaming}, where we give a streaming algorithm to
simulate an SCA with limited space.
Our general approach bases on dynamic programming and is able to cope with the
unpredictability of when, which, or even how many cells are deleted during the
simulation.
The space efficiency is achieved by keeping track of only as much information as
needed as to determine the state of the SCA's decision cell step for step.

A second technical contribution is the application of \emph{one-way}
communication complexity to obtain lower bounds for SCAs, which yields
Theorem~\ref{thm_linear_block_lang}.
Essentially, we split the input in some position $i$ of our choice (which may
even be non-uniformly dependent on the input length) and have $A$ be given as
input the symbols preceding $i$ while $B$ is given the rest, where $A$ and $B$
are (non-uniform) algorithms with unbounded computational resources.
We show that, in this setting, $A$ can determine the state of the SCA's decision
cell with only $O(1)$ information from $B$ for every step of the SCA.
Thus, an SCA with time complexity $t$ for a language $L$ yields a protocol with
$O(t)$ one-way communication complexity for the above problem.
Applying this in the contrapositive, Theorem~\ref{thm_linear_block_lang} then
follows from the existence of a language $L_1$ (in some contexts referred to as
the indexing or memory access problem) that has nearly linear one-way
communication complexity despite admitting an efficient streaming algorithm.

\ifconf
\subsubsection{Organization.}
\else
\subsection{Organization}
\fi

The rest of the paper is organized as follows:
Section~\ref{sec_preliminaries} presents the basic definitions.
In Section~\ref{sec_cap_lim_scas} we introduce block words and related concepts
and discuss the aforementioned limitations of sublinear-time SCAs.
Following that, in Section~\ref{sec_sca_streaming} we address the proof of
Theorem~\ref{thm_sca_to_streaming} and in Section~\ref{sec_sca_hm} that of
Theorem~\ref{thm_sca_hm}.
Finally, Section~\ref{sec_conclusion} concludes the paper.


\section{Preliminaries}
\label{sec_preliminaries}

We denote the set of integers by $\Z$, that of positive integers by $\N_+$, and
$\N_+ \cup \{ 0 \}$ by $\N_0$.
For $a,b \in \N_0$, $[a,b] = \{ x \in \N_0 \mid a \le x \le b \}$.
For sets $A$ and $B$, $B^A$ is the set of functions $A \to B$.

We assume the reader is familiar with cellular automata as well as with the
fundamentals of computational complexity theory (see, e.g., standard references
\cite{delorme99_cellular_book, goldreich08_computational_book,
arora09_computational_book}).
Words are indexed starting with index zero.
For a finite, non-empty set $\Sigma$, $\Sigma^\ast$ denotes the set of words
over $\Sigma$, and $\Sigma^+$ the set $\Sigma^\ast \setminus \{ \eps \}$.
For $w \in \Sigma^\ast$, we write $w(i)$ for the $i$-th symbol of $w$ (and, in
general, $w_i$ stands for another word altogether, \emph{not} the $i$-th symbol
of $w$).
For $a, b \in \N_0$, $w[a,b]$ is the subword $w(a) w(a+1) \cdots w(b-1) w(b)$ of
$w$ (where $w[a,b] = \eps$ for $a > b$).
$\abs{w}_a$ is the number of occurrences of $a \in \Sigma$ in $w$.
$\bin_n(x)$ stands for the binary representation of $x \in \N_0$, $x < 2^n$,
of length $n \in \N_+$ (padded with leading zeros).
$\poly(n)$ is the class of functions polynomial in $n \in \N_0$.
$\REG$ denotes the class of regular languages, and $\TISP[t,s]$ (resp.,
$\TIME[t]$) that of problems decidable by a Turing machine (with one tape and
one read-write head) in $O(t)$ time and $O(s)$ space (resp., unbounded space).
Without restriction, we assume the empty word $\eps$ is not a member of any of
the languages considered.

An \emph{$\omega$-word} is a map $\N_0 \to \Sigma$, and a \emph{$\domega$-word}
is a map $\Z \to \Sigma$.
We write $\Sigma^\omega = \Sigma^{\N_0}$ for the set of $\omega$-words over
$\Sigma$.
For $x \in \Sigma$, $x^\omega$ denotes the (unique) $\omega$-word with
$x^\omega(i) = x$ for every $i \in \N_0$.
To each $\domega$-word $w$ corresponds a unique pair $(w_-, w_+)$ of
$\omega$-words $w_-,w_+ \in \Sigma^\omega$ with $w_+(i) = w(i)$ for $i \ge 0$
and $w_-(i) = w(-i-1)$ for $i < 0$.
(Partial) $\omega$-word homomorphisms are extendable to (partial)
\emph{$\domega$-word homomorphisms} as follows:
Let $f\colon \Sigma^\omega \to \Sigma^\omega$ be an $\omega$-word homomorphism;
then there is a unique
$f_{\domega}\colon \Sigma^\Z \to \Sigma^\Z$ such that, for every
$w \in \Sigma^\Z$, $w' = f_{\domega}(w)$ is the $\domega$-word with
$w'_+ = f(w_+)$ and $w'_- = f(w_-)$.

For a circuit $C$, $\abs{C}$ denotes the \emph{size} of $C$, that is,
the total number of gates in $C$.
It is well-known that any Boolean circuit $C$ can be described by a binary
string of $O(\abs{C} \log \abs{C})$ length.

\begin{definition}[Streaming algorithm] \label{def_stream_alg}
  Let $s,u,r\colon \N_+ \to \N_+$ be functions.
  An \emph{$s$-space streaming algorithm} $A$ is a random access machine which,
on input $w$, works in $O(s(\abs{w}))$ space and, on every step, can either
perform an operation on a constant number of bits in memory or read the next
symbol of $w$.
  $A$ has $u$ \emph{update time} if, for every $w$, the number of operations it
performs between reading $w(i)$ and $w(i+1)$ is at most $u(\abs{w})$.
  $A$ has $r$ \emph{reporting time} if it performs at most $r(\abs{w})$
operations after having read $w(\abs{w}-1)$ (until it terminates).
\end{definition}

Our interest lies in $s$-space streaming algorithms that, for an input $w$, have
$\poly(s(\abs{w}))$ update and reporting time for sublinear $s$ (i.e.,
$s(\abs{w}) \in o(\abs{w})$).

\ifconf
\subsubsection{Cellular Automata.}
\else
\subsection{Cellular Automata}
\fi

We consider only CAs with the standard neighborhood.
The symbols of an input $w$ are provided from left to right in the cells
$0$ to $|w| - 1$ and are surrounded by inactive cells, which conserve their
state during the entire computation (i.e., the CA is bounded).
Acceptance is signaled by cell zero (i.e., the leftmost input cell).

\begin{definition}[Cellular automaton] \label{def_CA}
  A \emph{cellular automaton} (\emph{CA}) $C$ is a tuple $(Q,\delta,\Sigma,q,A)$
  where:
  \begin{itemize*}[label={}, afterlabel={}]
    \item $Q$ is a non-empty and finite set of \emph{states};
    \item $\delta\colon Q^3 \to Q$ is the \emph{local transition function};
    \item $\Sigma \subsetneq Q$ is the \emph{input alphabet} of $C$;
    \item $q \in Q \setminus \Sigma$ is the \emph{inactive state}, that is,
      $\delta(q_1,q,q_2) = q$ for every $q_1,q_2 \in Q$; and
    \item $A \subseteq Q \setminus \{ q \}$ is the set of
      \emph{accepting states} of $C$.
  \end{itemize*}
  A cell which is not in the inactive state is said to be \emph{active}.
  The elements of $Q^\Z$ are the (\emph{global}) \emph{configurations} of $C$.
  $\delta$ induces the \emph{global transition function}
  $\Delta\colon Q^\Z \to Q^\Z$ of $C$ by
  $\Delta(c)(i) = \delta(c(i-1),c(i),c(i+1))$ for every cell $i \in \Z$ and
  configuration $c \in Q^\Z$.

  $C$ \emph{accepts} an input $w \in \Sigma^+$ if cell zero is eventually in an
  accepting state, that is, there is $t \in \N_0$ such that $(\Delta^t(c_0))(0)
  \in A$, where $c_0 = c_0(w)$ is the \emph{initial configuration} (for $w$):
  $c_0(i) = w(i)$ for $i \in [0,\abs{w}-1]$, and $c_0(i) = q$ otherwise.
  For a minimal such $t$, we say $C$ accepts $w$ with \emph{time complexity}
  $t$.
  $L(A) \subseteq \Sigma^+$ denotes the set of words accepted by $C$.
  For $t\colon \N_+ \to \N_0$, $\CA[t]$ is the class of languages accepted by
  CAs with time complexity $O(t(n))$, $n$ being the input length.
\end{definition}

For convenience, we extend $\Delta$ in the obvious manner (i.e., as a map
induced by $\delta$) so it is also defined for every (finite) word
$w \in Q^\ast$.
For $\abs{w} \le 2$, we set $\Delta(w) = \eps$; for longer words,
$\abs{\Delta(w)} = \abs{w} - 2$ holds.

Some remarks concerning the classes $\CA[t]$:
$\CA[\poly] = \TISP[\poly,n]$ (i.e., the class of polynomial-time LBAs), and
$\CA[t] = \CA[1] \subsetneq \REG$ for every sublinear $t$.
Furthermore, $\CA[t] \subseteq \TISP[t^2,n]$ (where $t^2(n) = (t(n))^2$) and
$\TISP[t,n] \subseteq \CA[t]$.

\begin{definition}[Shrinking CA] \label{def_SCA}
  A \emph{shrinking CA} (\emph{SCA}) $S$ is a CA with a \emph{delete state}
  $\otimes \in Q \setminus (\Sigma \cup \{ q \})$.
  The global transition function $\Delta_\Shr$ of $S$ is given by applying the
  standard CA global transition function $\Delta$ (as in
  Definition~\ref{def_CA}) followed by removing all cells in the state
  $\otimes$; that is, $\Delta_\Shr = \Phi \circ \Delta$, where $\Phi\colon Q^\Z
  \to Q^\Z$ is the (partial) $\domega$-word homomorphism resulting from the
  extension to $Q^\Z$ of the map $\phi\colon Q \to Q$ with $\phi(\otimes) =
  \eps$ and $\phi(x) = x$ for $x \in Q \setminus \{ \otimes \}$.
  For $t\colon \N_+ \to \N_0$, $\SCA[t]$ is the class of languages accepted by
  SCAs with time complexity $O(t(n))$, where $n$ denotes the input length.
\end{definition}

Note that $\Phi$ is only partial since, for instance, any $\domega$-word in
$\otimes^\omega \cdot \Sigma^\ast \cdot \otimes^\omega$ has no proper image
(as it is not mapped to a $\domega$-word).
Hence, $\Delta_\Shr$ is also only a partial function (on $Q^\Z$); nevertheless,
$\Phi$ is total on the set of $\domega$-words in which $\otimes$ occurs only
finitely often and, in particular, $\Delta_\Shr$ is total on the set of
configurations arising from initial configurations for finite input words (which
is the setting we are interested in).

The acceptance condition of SCAs is the same as in Definition~\ref{def_CA}
(i.e., acceptance is dictated by cell zero).
Unlike in standard CAs, the index of one same cell can differ from one
configuration to the next; that is, a cell index does not uniquely determine a
cell on its own (rather, only in conjunction with a time step).
This is a consequence of applying $\Phi$, which contracts the global
configuration towards cell zero.
More precisely, for a configuration $c \in Q^\Z$, the cell with index $i \ge 0$
in $\Delta(c)$ corresponds to that with index $i + d_i$ in $c$, where $d_i$ is
the number of cells with index $\le i$ in $c$ that were deleted in the
transition to $\Delta(c)$.
This also implies the cell with index zero in $\Delta(c)$ is the same as that in
$c$ with minimal positive index that was not deleted in the transition to
$\Delta(c)$; thus, in any time step, cell zero is the leftmost active cell
(unless all cells are inactive; in fact, cell zero is inactive if and only if
all other cells are inactive).
Granted, what indices a cell has is of little importance when one is interested
only in the configurations of an SCA and their evolution; nevertheless, they are
relevant when simulating an SCA with another machine model (as we do in
Sections~\ref{sec_linear_block_lang}~and~\ref{sec_sca_streaming}).

Naturally, $\CA[t] \subseteq \SCA[t]$ for every function $t$, and
$\SCA[\poly] = \CA[\poly]$.
For sublinear $t$, $\SCA[t]$ contains non-regular languages if, for instance,
$t \in \Omega(\log n)$ (see below); hence, the inclusion of $\CA[t]$ in
$\SCA[t]$ in strict.
In fact, this is the case even if we consider only regular languages.
One simple example is $L = \{ w \in \binalph^+ \mid w(0) = w(\abs{w}-1) \}$,
which is in $\SCA[1]$ and regular but not in $\CA[o(n)] = \CA[O(1)]$.
One obtains an SCA for $L$ by having all cells whose both neighbors are active
delete themselves in the first step; the two remaining cells then compare their
states, and cell zero accepts if and only if this comparison succeeds or if the
input has length $1$ (which it can notice immediately since it is only for such
words that it has two inactive neighbors).
Formally, the local transition function $\delta$ is such that, for $z_1,z_3 \in
\{ 0,1,q \}$ and $z_2 \in \binalph$, $\delta(z_1,z_2,z_3) = \otimes$ if both
$z_1$ and $z_3$ are in $\binalph$, $\delta(z_1,z_2,z_3) = z_2'$ if $z_1 = q$ or
$z_3 = q$, and $\delta(q,z_2',z_2') = \delta(q,z_2',q) = a$; in all other cases,
$\delta$ simply conserves the cell's state.
See Figure~\ref{fig_SCA_0ast} for an example.

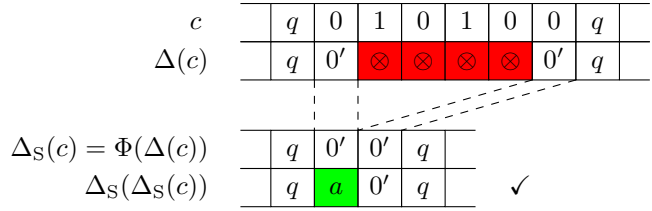
\begin{figure}[t]
  \centering
  \begin{tikzpicture}[every node/.style={block},
      block/.style={
        minimum width=width("$\otimes$") + 3mm,
        minimum height=height("$\otimes$") + 3mm,
        text height=1.5ex, text depth=.3ex, outer sep=0pt,
        draw, rectangle, node distance=0pt}]

    \setlength{\rowspc}{1.9em}
    \setlength{\lblspc}{2em}

    \node (a0)               {$q$};
    \node (a1) [right=of a0] {$0$};
    \node (a2) [right=of a1] {$1$};
    \node (a3) [right=of a2] {$0$};
    \node (a4) [right=of a3] {$1$};
    \node (a5) [right=of a4] {$0$};
    \node (a6) [right=of a5] {$0$};
    \node (a7) [right=of a6] {$q$};
    \draw (a0.north west) -- ++(-0.4cm,0)
          (a0.south west) -- ++(-0.4cm,0)
          (a7.north east) -- ++(0.4cm,0)
          (a7.south east) -- ++(0.4cm,0);

    \node (al) [draw=none, left=of a0, xshift=-\lblspc] {$c$};

    \node (b0) [below=of a0]          {$q$};
    \node (b1) [right=of b0]          {$0'$};
    \node (b2) [right=of b1,fill=red] {$\otimes$};
    \node (b3) [right=of b2,fill=red] {$\otimes$};
    \node (b4) [right=of b3,fill=red] {$\otimes$};
    \node (b5) [right=of b4,fill=red] {$\otimes$};
    \node (b6) [right=of b5]          {$0'$};
    \node (b7) [right=of b6]          {$q$};
    \draw (b0.north west) -- ++(-0.4cm,0)
          (b0.south west) -- ++(-0.4cm,0)
          (b7.north east) -- ++(0.4cm,0)
          (b7.south east) -- ++(0.4cm,0);

    \node (bl) [draw=none, left=of b0, xshift=-\lblspc] {$\Delta(c)$};

    \node (c0) [below=of b0,yshift=-\rowspc] {$q$};
    \node (c1) [right=of c0]                 {$0'$};
    \node (c2) [right=of c1]                 {$0'$};
    \node (c3) [right=of c2]                 {$q$};
    \draw (c0.north west) -- ++(-0.4cm,0)
          (c0.south west) -- ++(-0.4cm,0)
          (c3.north east) -- ++(0.4cm,0)
          (c3.south east) -- ++(0.4cm,0);

    \draw[dashed] (b1.south west) -- (c1.north west)
                  (b1.south east) -- (c1.north east)
                  (b6.south west) -- (c2.north west)
                  (b6.south east) -- (c2.north east);

    \node (cl) [draw=none, left=of c0, xshift=-\lblspc]
      {$\Delta_\Shr(c) = \Phi(\Delta(c))$};

    \node (d0) [below=of c0]            {$q$};
    \node (d1) [right=of d0,fill=green] {$a$};
    \node (d2) [right=of d1]            {$0'$};
    \node (d3) [right=of d2]            {$q$};
    \draw (d0.north west) -- ++(-0.4cm,0)
          (d0.south west) -- ++(-0.4cm,0)
          (d3.north east) -- ++(0.4cm,0)
          (d3.south east) -- ++(0.4cm,0);

    \node (dl) [draw=none, left=of d0, xshift=-\lblspc]
      {$\Delta_\Shr(\Delta_\Shr(c))$};

    \node (cm) [draw=none, right=of d3, xshift=\lblspc] {\checkmark};
  \end{tikzpicture}
  \caption{Computation of an SCA that recognizes
    $L = \{ w \in \binalph^+ \mid w(0) = w(\abs{w}-1) \}$ in $O(1)$ time.
    Here, the input word is $010100 \in L$.}
  \label{fig_SCA_0ast}
\end{figure}

Using a textbook technique to simulate a (bounded) CA with an LBA (simply
skipping deleted cells), we have:

\begin{proposition}
  For every function $t\colon \N_+ \to \N_+$ computable by an LBA in
  $O(n \cdot t(n))$ time, $\SCA[t] \subseteq \TISP[n \cdot t(n), n]$.
\end{proposition}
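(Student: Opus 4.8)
The plan is the following. Given $L \in \SCA[t]$, fix an SCA $S$ with $L(S) = L$ that accepts every $w \in L$ with time complexity at most $c \cdot t(\abs{w})$ for some constant $c$, and build a one-tape Turing machine $M$ that simulates $S$ for $c \cdot t(n)$ steps and accepts iff cell zero of $S$ is in an accepting state at some point during that window; since $S$ accepts $w$ iff cell zero is \emph{eventually} accepting, and for $w \in L$ this occurs within the window while for $w \notin L$ it never occurs, $M$ decides $L$. The tape of $M$ holds the active part of the current configuration of $S$ (which starts with $n$ cells and only shrinks) in a region of $n + O(1)$ cells, flanked by one inactive sentinel cell on each side; because an inactive cell neither becomes active nor is ever deleted (we have $\delta(q_1, q, q_2) = q$ and $\otimes \neq q$), these sentinels correctly supply the boundary neighbors for $\delta$, and the active cells never leave the original region. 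Crucially, and as the hint suggests, $M$ does not physically compact its tape when cells are deleted: a cell entering the delete state $\otimes$ is left in place as an $\otimes$-marker, and when the next transition is applied every such marker is treated as absent.

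One step of $S$ is then simulated by a single left-to-right sweep. $M$ carries in its finite control the old state of the last non-marker cell it has passed; reaching the current cell (skipping it if it is a marker), it peeks rightward across the maximal run of $\otimes$-markers to read the old state of the next non-marker cell, returns, computes the new state from these two neighbor states via $\delta$, and overwrites the cell (writing $\otimes$ if the new state is the delete state). The only cost per non-marker cell beyond $O(1)$ is crossing the marker run to its right a bounded number of times; as the tape carries at most $n$ markers and each lies in exactly one such run, the entire sweep costs $O(n)$, hence $O(n \cdot t(n))$ over all $c \cdot t(n)$ simulated steps. After each sweep, $M$ scans from the left sentinel past markers to the first genuine cell, which is exactly cell zero of $S$, and accepts if that cell is accepting (if no genuine cell exists, all cells are inactive and cell zero is non-accepting). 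One checks that, with markers retained, the nearest-non-marker neighbor relation used by $M$ faithfully reproduces the effect of $\Phi \circ \Delta$ on the genuine compacted configuration of $S$, so the simulation is correct.

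For the remaining bookkeeping we invoke the hypothesis that $t$ is computable by an LBA in $O(n \cdot t(n))$ time: before simulating, $M$ computes $c \cdot t(n)$ in a separate $O(n)$-cell region of its tape (which also shows $t(n) = 2^{O(n)}$, so the value fits in $O(n)$ space), and it maintains there a step counter that it decrements after each simulated step, halting and rejecting when it reaches zero. Moving between the configuration region and the counter region and updating the counter costs $O(n)$ per step, absorbed into the time bound. Altogether $M$ runs in $O(n \cdot t(n))$ time and $O(n)$ space, giving $L \in \TISP[n \cdot t(n), n]$.

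The point most in need of care is the deletion step: a simulation that literally removes deleted cells and closes the gaps would spend $\Theta(n)$ time per deleted cell, and thus up to $\Theta(n^2)$ time on a single step in which many cells vanish. Keeping deletion markers in place and charging each marker-run traversal to a single adjacent genuine cell is precisely what holds the per-step cost at $O(n)$; the only other thing to verify carefully is the correctness claim above, that retaining markers does not change which global transitions $M$ effectively computes.
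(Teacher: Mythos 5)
Your proof is correct and is exactly the argument the paper has in mind: the paper dispatches this proposition in one line as ``the textbook technique to simulate a bounded CA with an LBA, simply skipping deleted cells,'' and your writeup (retaining $\otimes$-markers in place, charging each marker-run traversal to the adjacent genuine cell to keep each sweep at $O(n)$, and using the LBA-computability of $t$ for the step counter) is a faithful and careful elaboration of precisely that technique.
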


The inclusion is actually proper (see Corollary~\ref{cor_SCA_neq_TISP}).
Using the well-known result that $\TIME[o(n \log n)] = \REG$
\cite{kobayashi85_structure_tcs}, it follows that at least a logarithmic time
bound is needed for SCAs to recognize languages which are not regular:

\begin{corollary}
  $\SCA[o(\log)] \subsetneq \REG$.
\end{corollary}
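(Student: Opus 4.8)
The goal is to show $\SCA[o(\log)] \subsetneq \REG$. The plan is to prove two things: the inclusion $\SCA[o(\log)] \subseteq \REG$, and its strictness.

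For the inclusion, I would chain together the results already assembled in the excerpt. First, by the proposition just stated, if $t$ is a function computable by an LBA in $O(n \cdot t(n))$ time, then $\SCA[t] \subseteq \TISP[n \cdot t(n), n]$. Taking $t \in o(\log n)$, the obvious subtlety is that $\SCA[t]$ as defined refers to SCAs running in time $O(t(n))$, so I should really argue about a concrete bound like $t(n)$ itself; any such slowly-growing $t$ (e.g.\ one can assume without loss of generality $t$ is nondecreasing and LBA-computable, since otherwise the class is trivial) gives $n \cdot t(n) \in o(n \log n)$. Hence $\SCA[o(\log)] \subseteq \TISP[o(n\log n), n] \subseteq \TIME[o(n \log n)]$, and then the Kobayashi result $\TIME[o(n\log n)] = \REG$ cited in the excerpt yields $\SCA[o(\log)] \subseteq \REG$.

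For strictness, I would exhibit a single regular language not in $\SCA[o(\log)]$. A natural candidate is a unary or near-unary counting language, e.g.\ $L = \{ 0^k 1 0^m \mid \text{something periodic in } k \}$ or simply a language distinguishing inputs by a bit far from cell zero in a way that forces a signal of logarithmic length — but actually the cleanest route is to invoke the first advertised limitation (Lemma~\ref{lem_pumping_SCA}, the pumping lemma for SCAs), which the excerpt promises shows parity/modulo-type languages are not in sublinear-time $\SCA$. That overshoots $o(\log)$, so it certainly gives a regular language outside $\SCA[o(\log)]$. Alternatively, and more elementarily, one can use the fact (also essentially in the excerpt's discussion of signals and distances) that an SCA deciding in time $t(n)$ can only ``contract'' the configuration so much: with $t(n) \in o(\log n)$ the number of cells deleted is bounded, so two input cells at distance $\omega(t(n))$ never influence a common cell, and a standard fooling/crossing-sequence argument separates inputs — giving, say, $L = \{ w \in \binalph^+ \mid w(\lfloor \abs{w}/2 \rfloor) = 1 \}$ as a regular language not in $\SCA[o(\log)]$ (one must be a little careful since $L(0)=L$-style membership of the midpoint bit is trivially regular). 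I expect the main obstacle to be this strictness direction: making the ``bounded contraction in $o(\log)$ steps'' intuition rigorous, i.e.\ quantifying how few cells can be deleted in sublogarithmic time and concluding a genuine information bottleneck, rather than the inclusion direction which is essentially a citation chain. If Lemma~\ref{lem_pumping_SCA} is available by the time this corollary is stated, I would simply cite it and keep the proof to one or two lines.
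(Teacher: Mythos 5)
Your proposal matches the paper's argument: the inclusion is exactly the chain $\SCA[o(\log)] \subseteq \TISP[o(n\log n), n] \subseteq \TIME[o(n\log n)] = \REG$ via the preceding proposition together with Kobayashi's theorem, and strictness is obtained by forward-citing that the regular language $\PAR$ lies outside $\SCA[o(n)] \supseteq \SCA[o(\log)]$ (Corollary~\ref{cor_SCA_neq_TISP}), which is precisely the route you say you would take. One caveat on your ``more elementary'' alternative: the midpoint language $\{ w \in \binalph^+ \mid w(\lfloor \abs{w}/2 \rfloor) = 1 \}$ is not regular (intersecting with $0^\ast 1 0^\ast$ yields essentially $\{0^a 1 0^b \mid a \approx b\}$), so it could not witness strictness -- but since your primary route is the citation to the parity lower bound, this does not affect the correctness of your proof.
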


This bound is tight:
It is relatively easy to show that any language accepted by ACAs in $t(n)$ time
can also be accepted by an SCA in $t(n) + O(1)$ time.
Since there is a non-regular language recognizable by ACAs
\cite{ibarra85_fast_tcs} in $O(\log n)$ time, the same language is recognizable
by an SCA in $O(\log n)$ time.

For any finite, non-empty set $\Sigma$, we say a function
$f\colon \Sigma^+ \to \Sigma^+$ is \emph{computable in place} by an (S)CA if
there is an (S)CA $S$ which, given $x \in \Sigma^+$ as input (surrounded by
inactive cells), produces $f(x)$.
Additionally, $g\colon \N_+ \to \N_+$ is \emph{constructible in place} by an
(S)CA if $g(n) \le 2^n$ and there is an (S)CA $S$ which, given $n \in \N_0$ in
unary, produces $\bin_n(g(n) - 1)$ (i.e., $g(n)-1$ in binary).
Note the set of functions computable or constructible in place by an (S)CA in at
most $t(n)$ time, where $n$ is the input length and $t\colon \N_+ \to \N_+$ is
some function, includes (but is not limited to) all functions computable by an
LBA in at most $t(n)$ time.


\section{Capabilities and Limitations of Sublinear-Time SCAs}
\label{sec_cap_lim_scas}

\subsection{Block Languages}
\label{sec_block_langs}

Let $\Sigma$ be a finite, non-empty set.
For $\Sigma_\eps = \Sigma \cup \{ \eps \}$ and $x,y \in \Sigma^+$,
$\binom{x}{y}$ denotes the (unique) word in $(\Sigma_\eps \times \Sigma_\eps)^+$
of length $\max\{ \abs{x}, \abs{y} \}$ for which
$\binom{x}{y}(i) = (x(i),y(i))$, where $x(i) = y(j) = \eps$ for $i \ge \abs{x}$
and $j \ge \abs{y}$.

\begin{definition}[Block word] \label{def_block_word}
  Let $n, m, b \in \N_+$ be such that $b \ge n$ and $m \le 2^n$.
  A word $w$ is said to be an $(n,m,b)$-\emph{block word} (over $\Sigma$) if it
  is of the form $w = w_0 \# w_1 \# \cdots \# w_{m-1}$ and $w_i =
  \binom{\bin_n(x_i)}{y_i}$, where $x_0 \ge 0$, $x_{i+1} = x_i + 1$ for every
  $i$, $x_{m-1} < 2^n$, and $y_i \in \Sigma^b$.
  In this context, $w_i$ is the \emph{$i$-th block} of $w$.
\end{definition}

Hence, every $(n,m,b)$-block word $w$ has $m$ many blocks of length $b$,
and its total length is $\abs{w} = (b + 1) \cdot m - 1 \in \Theta(bm)$.
For example,
\[
  w = \binom{01}{0100} \# \binom{10}{1100} \# \binom{11}{1000}
\]
is a $(2,3,4)$-block word with $x_0 = 1$, $y_0 = 0100$, $y_1 = 1100$, and
$y_2 = 1000$.
$n$ is implicitly encoded by the entries in the upper track (i.e., the $x_i$)
and we shall see $m$ and $b$ as parameters depending on $n$ (see
Definition~\ref{def_block_lang} below), so the structure of each block can be
verified locally (i.e., by inspecting the immediate neighborhood of every
block).
Note the block numbering starts with an arbitrary $x_0$; this is intended so
that, for $m' < m$, an $(n,m,b)$-block word admits $(n,m',b)$-block words as
infixes (which would not be the case if we required, say, $x_0 = 0$).

When referring to block words, we use $N$ for the block word length $\abs{w}$
and reserve $n$ for indexing block words of different block length, overall
length, or total number of blocks (or any combinations thereof).
With $m$ and $b$ as parameters depending on $n$, we obtain sets of block words:

\begin{definition}[Block language] \label{def_block_lang}
  Let $m,b\colon \N_+ \to \N_+$ be non-decreasing and constructible in place by
  a CA in $O(m(n) + b(n))$ time.
  Furthermore, let $b(n) \ge n$ and $m(n) \le 2^n$.
  Then, $\B^m_b$ denotes the set of all $(n,m(n),b(n))$-block words for
  $n \in \N_+$, and every subset $L \subseteq \B^m_b$ is an
  (\emph{$(n,m,b)$-})\emph{block language} (over $\Sigma$).
\end{definition}

An SCA can \emph{verify} its input is a valid block word in $O(b(n))$ time, that
is, locally check that the structure and contents of the blocks are consistent
(i.e., as in Definition~\ref{def_block_word}).
This can be realized using standard CA techniques without need of shrinking (see
\cite{modanese20_sublinear-time_dlt, ibarra85_fast_tcs} for constructions).
Recall Definition~\ref{def_SCA} does not require an SCA $S$ to explicitly reject
inputs not in $L(S)$, that is, the time complexity of $S$ on an input $w$ is
only defined for $w \in L(S)$.
As a result, when $L(S)$ is a block language, the time spent verifying that $w$
is a block word is only relevant if $w \in L(S)$ and, in particular, if $w$ is a
(valid) block word.
Provided the state of every cell in $S$ eventually impacts its decision to
accept (which is the case for all constructions we describe), it suffices to
have a cell mark itself with an error flag whenever a violation in $w$ is
detected (even if other cells continue their operation as normal); since every
cell is relevant towards acceptance, this eventually prevents $S$ from accepting
(and, since $w \not\in L(S)$, it is irrelevant how long it takes for this to
occur).
Thus, for the rest of this paper, when describing an SCA for a block language,
we implicitly require that the SCA checks its input is a valid block word
beforehand.

As stated in the introduction, our interest in block words is as a special input
format.
There is a natural bijection between any language and a block version of it,
namely by mapping each word $z$ to a block word $w$ in which each block $w_i$
contains a symbol $z(i)$ of $z$ (padded up to the block length $b$) and the
blocks are numbered from $0$ to $\abs{z} - 1$:

\begin{definition}[Block version of a language] \label{def_block_ver}
  Let $L \subseteq \Sigma^+$ be a language and $b$ as in
  Definition~\ref{def_block_lang}.
  The \emph{block version} $\Block_b(L)$ of $L$ (with blocks of length $b$) is
  the block language for which, for every $z \in \Sigma^+$, $z \in L$ holds if
  and only if we have $w \in \Block_b(L)$ where $w$ is the $(n,m,b(n))$-block
  word (as in Definition~\ref{def_block_word}) with $m = \abs{z}$, $n =
  \ceil{\log m}$, $x_0 = 0$, and $y_i = z(i)0^{b(n)-1}$ for every $i \in
  [0,m-1]$.
\end{definition}

Note that, for any such language $L$, $\Block_b(L) \not\in \REG$ for any $b$
(since $b(n) \ge n$ is not constant); hence, $\Block_b(L) \in \SCA[t]$ only for
$t \in \Omega(\log n)$ (and constructible $b$).
For $b(n) = n$, $\Block_n(L)$ is the block version with minimal padding.

For any two finite, non-empty sets $\Sigma_1$ and $\Sigma_2$, say a function
$f\colon \Sigma_1^+ \to \Sigma_2^+$ is \emph{non-stretching} if
$\abs{f(x)} \le \abs{x}$ for every $x \in \Sigma_1^+$.
We now define $k$-blockwise maps, which are maps that operate on block words by
grouping $k(n)$ many blocks together and mapping each such group (in a
non-stretching manner) to a single block of length at most
$(b(n) + 1) \cdot k(n) - 1$.

\begin{definition}[Blockwise map] \label{def_bw_map}
  Let $k\colon \N_+ \to \N_+$, $k(n) \ge 2$, be a non-decreasing function and
  constructible in place by a CA in $O(k(n))$ time.
  A map $g\colon \B^{km}_b \to \B^m_b$ is a \emph{$k$-blockwise map} if there
  is a non-stretching $g'\colon \B^k_b \to \Sigma^+$ such that, for every $w
  \in \B^{km}_b$ (as in Definition~\ref{def_block_word}) and $w_i' = w_{ik} \#
  \cdots \# w_{(i+1)k - 1}$:
  \[
    g(w) = \binom{\bin_n(x_0)}{g'(w_0')} \# \cdots \#
    \binom{\bin_n(x_{m-1})}{g'(w_{m-1}')}.
  \]
\end{definition}

Using blockwise maps, we obtain a very natural form of reduction operating on
block words and which is highly compatible with sublinear-time SCAs as a
computational model.
The reduction divides an $(n,km,b)$-block word in $m$ many groups of $k$ many
contiguous blocks and, as a $k$-blockwise map, maps each such group to a single
block (of length $b$):

\begin{definition}[Blockwise reducible] \label{def_bw_red}
  For block languages $L$ and $L'$, $L$ is (\emph{$k$-})\emph{blockwise
    reducible} to $L'$ if there is a computable $k$-blockwise map $g\colon
  \B^{km}_b \to \B^m_b$ such that, for every $w \in \B^{km}_b$, we have $w \in
  L$ if and only if $g(w) \in L'$.
\end{definition}

Since every application of the reduction reduces the instance length by a factor
of approximately $k$, logarithmically many applications suffice to produce a
trivial instance (i.e., an instance consisting of a single block).
This gives us the following computational paradigm of chaining blockwise
reductions together:

\begin{lemma} \label{lem_bw_reduction_chain}
  Let $k,r \colon \N_+ \to \N_0$ be functions, and let $L \subseteq \B^{k^r}_b$
  be such that there is a series $L = L_0, L_1, \dots, L_{r(n)}$ of languages
  with $L_i \subseteq \B^{k^{r-i}}_b$ and such that $L_i$ is $k(n)$-blockwise
  reducible to $L_{i+1}$ via the (same) blockwise reduction $g$.
  Furthermore, let $g'$ be as in Definition~\ref{def_bw_map}, and let
  $t_{g'}\colon \N_+ \to \N_+$ be non-decreasing and such that, for every
  $w' \in \B^r_b$, $g'(w')$ is computable in place by an SCA in
  $O(t_{g'}(\abs{w'}))$ time.
  Finally, let $L_{r(n)} \in \SCA[t]$ for some function $t\colon \N_+ \to \N_+$.
  Then, $L \in \SCA[r(n) \cdot t_{g'}(O(k(n) \cdot b(n))) + O(b(n)) + t(b(n))]$.
\end{lemma}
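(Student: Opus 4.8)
The plan is to build an SCA for $L$ that simulates the chain of reductions $g$ in place, one application at a time, and then runs the SCA for $L_{r(n)}$ on the resulting single-block instance. The input is a block word $w \in \B^{k^r}_b$ with $N = \abs{w} \in \Theta(b \cdot k^r)$ cells. First I would have the SCA verify (as always, using standard non-shrinking CA machinery in $O(b(n))$ time) that $w$ is a syntactically valid block word; this accounts for the $O(b(n))$ summand. The core of the construction is a subroutine that, given a block word in $\B^{k^{r-i}}_b$ laid out in the cells, computes $g$ applied to it, producing a block word in $\B^{k^{r-i-1}}_b$ that occupies the leftmost cells (with the freed-up cells deleted via the $\otimes$ state). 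Concretely, within each group of $k(n)$ consecutive blocks — delimited using the block numbering so that a cell can tell, by reading the low-order bits of its block index, whether it is the first block of a group — the SCA runs the in-place SCA for $g'$ on that group's contents. By Definition~\ref{def_bw_map}, $g'$ is non-stretching, so the output fits in the space of a single block of length $b(n)$; the remaining up to $(b(n)+1)\cdot k(n) - 1 - b(n)$ cells in each group delete themselves. Since all $m$ groups run their copy of the $g'$-computation in parallel and each group has $O(k(n) \cdot b(n))$ cells, one application of $g$ costs $O(t_{g'}(O(k(n) \cdot b(n))))$ time. The newly-written block must also carry the correct index $\bin_n(x_j)$; but the indices of the surviving blocks after contraction are exactly $x_0, x_1, \dots$ in order (the first block of group $j$ had index $x_{jk}$, and blockwise maps relabel these to $x_0, \dots, x_{m-1}$), so the relabelling is a simple local renumbering that can be folded into the same pass, or done with an $O(b(n))$-time counter sweep.

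Next I would chain these applications: after the first application we have an instance in $\B^{k^{r-1}}_b$, after the $i$-th an instance in $\B^{k^{r-i}}_b$, and after $r(n)$ applications a single block of length $b(n)$, which by hypothesis is an instance of $L_{r(n)}$. One subtlety is synchronization — the SCA must know when one application of $g$ has finished everywhere so it can start the next. Since $t_{g'}$ is non-decreasing and the block length $b(n)$ is fixed throughout (all the $L_i$ live over the same $b$), every group takes the same amount of time, so a straightforward clock running for the worst-case $\Theta(t_{g'}(O(k(n) b(n))))$ steps (constructible in place, e.g. by a counter in a dedicated track of the leftmost block, which is always active) suffices; alternatively, a firing-squad-style signal can confirm global completion. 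After all $r(n)$ rounds, the SCA invokes the assumed SCA for $L_{r(n)}$ on the single remaining block, which costs $t(b(n))$ time. Cell zero accepts iff that sub-SCA accepts (and iff the block-word check passed). Summing the three phases gives the claimed bound $r(n) \cdot t_{g'}(O(k(n) \cdot b(n))) + O(b(n)) + t(b(n))$.

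The correctness argument is then: by induction on $i$, after $i$ rounds the cells hold (the block-word encoding of) $g^{(i)}(w)$, which lies in $\B^{k^{r-i}}_b$, using the fact that $g$ is a $k(n)$-blockwise map built from the non-stretching $g'$, so the output genuinely fits in the allotted cells and the contraction performed by the $\otimes$-deletions yields exactly the right layout. By the hypothesis that $L_i$ is $k(n)$-blockwise reducible to $L_{i+1}$ via $g$, we have $w \in L = L_0 \iff g(w) \in L_1 \iff \cdots \iff g^{(r(n))}(w) \in L_{r(n)}$, and the last membership is decided by the final sub-SCA; hence the whole SCA accepts iff $w \in L$.

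The main obstacle I anticipate is the bookkeeping around cell indices and synchronization under shrinking: after deletions the physical positions of cells change, so one must be careful that (a) the group boundaries for the next round are still locally recognizable — this is handled by re-deriving them from the freshly-written block indices $\bin_n(x_j)$ rather than from positions — and (b) the global clock governing round transitions lives in a cell (namely cell zero, or the leftmost block) that is guaranteed never to be deleted, and that its $\Theta(t_{g'})$-step count is itself constructible in place within $O(b(n))$ space, which is exactly what Definition~\ref{def_block_lang}'s constructibility-in-place hypotheses on $m$, $b$ (and the analogous hypothesis on $k$ in Definition~\ref{def_bw_map}) are there to guarantee. Everything else is routine CA signal-engineering of the kind cited in \cite{modanese20_sublinear-time_dlt, ibarra85_fast_tcs}.
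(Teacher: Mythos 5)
Your overall construction is the same as the paper's: iterate the blockwise reduction $g$ in place, computing $g'$ on all groups of $k(n)$ blocks in parallel and contracting via $\otimes$, then hand the final block to the SCA for $L_{r(n)}$. The gap is in how you synchronize the rounds. You assert that ``every group takes the same amount of time'' because $t_{g'}$ is non-decreasing and the block length is fixed; this does not follow. The hypothesis only gives an asymptotic \emph{upper bound} of $O(t_{g'}(\abs{w'}))$ on the in-place computation of $g'$, and the actual halting time may depend on the \emph{contents} of a group, not just its length, so different groups can finish a round at different times. Your two proposed fixes both fail or rest on unstated assumptions: a clock housed in cell zero (or the leftmost block) cannot ``govern round transitions'' for distant groups without broadcasting a signal across the whole array, and likewise a firing-squad-style confirmation of \emph{global} completion requires communication over all $\Theta(b(n)\cdot k(n)^{r(n)})$ cells, costing time linear in the input length before any shrinking has reduced the distances --- this alone would exceed the claimed sublinear bound. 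A purely local per-group clock counting to a worst-case $T \in \Theta(t_{g'}(O(k(n)\, b(n))))$ would work only if $T$ (including the hidden constant) were constructible in place within each group, which is not among the lemma's hypotheses ($t_{g'}$ is merely assumed non-decreasing).

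The paper resolves this differently: it drops global synchronization altogether and lets the rounds proceed \emph{asynchronously}, requiring only that, before a group starts its execution of the $g'$-procedure, all $k(n)$ of its constituent blocks (the outputs of the previous round's groups) are already present. This is enforced by local signals confined to each group (the first block sends a signal across the group to check readiness; upon reaching the last block a return signal triggers the start), so no information ever travels farther than $O(k(n)\cdot b(n))$ cells, and the per-round cost stays within $t_{g'}(O(k(n)\cdot b(n)))$ by monotonicity of $t_{g'}$ and the non-stretching property of $g'$. A further small omission: the paper's $O(b(n))$ term pays for synchronizing the cells of the single remaining block (via a standard synchronization algorithm on $b(n)$ cells) before emulating the SCA for $L_{r(n)}$, a step you skip; your attribution of that term to block-word verification is harmless but does not replace this final synchronization.
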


\begin{proof}
  We consider the SCA $S$ which, given $w \in \B^{k^r}_b$, repeatedly applies
  the reduction $g$, where each application of $g$ is computed by applying $g'$
  on each group of relevant blocks (i.e., the $w_i'$ from
  Definition~\ref{def_bw_map}) in parallel.

  One detail to note is that this results in the same procedure $P$ being
  applied to different groups of blocks in parallel, but it may be so that $P$
  requires more time for one group of blocks than for the other.
  Thus, we allow the entire process to be carried out asynchronously but require
  that, for each group of blocks, the respective results be present before each
  execution of $P$ is started.
  (One way of realizing this, for instance, is having the first block in the
  group send a signal across the whole group to ensure all inputs are available
  and, when it arrives at the last block in the group, another signal is sent
  to trigger the start of $P$.)

  Using that $t_{g'}$ is non-decreasing and that $g'$ is non-stretching, the
  time needed for each execution of $P$ is
  $t_{g'}(\abs{w_i'}) \in t_{g'}(O(k(n) \cdot b(n)))$
  (which is not impacted by the considerations above) and, since there are
  $r(n)$ reductions in total, we have $r(n) \cdot t_{g'}(O(k(n) \cdot b(n)))$
  time in total.
  Once a single block is left, the cells in this block synchronize themselves
  and then behave as in the SCA $S'$ for $L_{r(n)}$ guaranteed by the
  assumption; using a standard synchronization algorithm, this requires
  $O(b(n))$ for the synchronization, plus $t(b(n))$ time for emulating $S'$.
\end{proof}

\subsection{Block Languages and Parallel Computation}
\label{sec_parity_etc}

In this section, we prove the first limitation of SCAs discussed in the
introduction (Lemma~\ref{lem_pumping_SCA}) and which renders them unable of
accepting the languages $\PAR$, $\MOD_q$, $\MAJ$, and $\THR_k$ (defined next) in
sublinear time.
Nevertheless, as is shown in Proposition~\ref{prop_sca_par_etc}, the
\emph{block versions} of these languages can be accepted quite efficiently.
This motivates the block word presentation for inputs; that is, this first
limitation concerns only the \emph{presentation} of instances (and, hence, is
not a \emph{computational} limitation of SCAs).

Let $q > 2$ and let $k\colon \N_+ \to \N_+$ be constructible in place by a CA
in at most $t_k(n)$ time for some $t_k\colon \N_+ \to \N_+$.
Additionally, let $\PAR$ (resp., $\MOD_q$; resp., $\MAJ$; resp., $\THR_k$) be
the language consisting of every word $w \in \binalph^+$ for which $\abs{w}_1$
is even (resp., $\abs{w}_1 = 0 \pmod q$; resp., $\abs{w}_1 \ge \abs{w}_0$;
resp., $\abs{w}_1 \ge k(\abs{w})$).

The following is a simple limitation of sublinear-time CA models such as ACAs
(see also \cite{sommerhalder83_parallel_ai}) which we show also to hold for
SCAs.

\begin{lemma} \label{lem_pumping_SCA}
  Let $S$ be an SCA with input alphabet $\Sigma$, and let $x \in \Sigma$ be such
  that there is a minimal $t \in \N_+$ for which $\Delta_\Shr^t(y) = \eps$,
  where $y = x^{2t+1}$ (i.e., the symbol $x$ concatenated $2t+1$ times with
  itself).
  Then, for every $z_1,z_2 \in \Sigma^+$, $w = z_1 y z_2 \in L(S)$ holds if and
  only if for every $i \in \N_0$ we have $w_i = z_1 y x^i z_2 \in L(S)$.
\end{lemma}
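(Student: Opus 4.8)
The claim is a pumping-style statement: inserting extra copies of the symbol $x$ into the middle of the $y$-block does not change membership. The key structural fact to exploit is that $x^{2t+1}$ collapses to $\eps$ in exactly $t$ steps under $\Delta_\Shr$, and that SCAs are local. I would first analyze what happens to a configuration of the form $z_1\, x^N\, z_2$ for large $N$: because the local transition function sees only a radius-$1$ neighborhood, every cell strictly inside the $x$-block (i.e., at distance $\ge 2$ from either boundary in the relevant time window) behaves exactly the same regardless of how many $x$'s there are. So the plan is to track the ``left interface'' (the evolution of $z_1$ together with the left end of the $x$-block) and the ``right interface'' (the evolution of the right end of the $x$-block together with $z_2$) independently, and argue that what happens in the $x$-interior is irrelevant except that it takes a certain amount of time to disappear.

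**Key steps, in order.** (1) Observe that on input $x^{2t+1}$, all $2t+1$ cells delete themselves within $t$ steps (by hypothesis, and minimality of $t$). By locality, on input $x^{M}$ for $M \ge 2t+1$, after $t$ steps nothing of the pure-$x$ region survives except possibly near the two ends — more precisely, the behavior of a cell in $x^M$ depends only on its distance to the nearest non-$x$ cell (here the boundary-$q$ cells), as long as that distance exceeds the radius times the number of steps elapsed. (2) Make this precise with a standard ``light-cone'' argument: the state of cell $j$ after $s$ steps of $\Delta_\Shr$ depends only on the initial cells within distance $s$ of $j$ (deletion only shrinks this dependency). Hence in $z_1\, x^{2t+1+i}\, z_2$ versus $z_1\, x^{2t+1}\, z_2$, the left-interface evolution (cells originating from $z_1$ and the leftmost few $x$'s) is identical in both, \emph{until} the two interfaces would meet. (3) Show the two interfaces, viewed as two separate SCA computations on $z_1 x^\omega$ and $x^\omega z_2$, eventually ``consume'' all the $x$-cells between them: because $x^{2t+1}$ vanishes in $t$ steps, the $x$-region loses cells over time, so for the short version ($i=0$) the interfaces meet after finitely many steps, and for the longer versions they meet later but in the \emph{same relative configuration}. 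Once they meet, the subsequent evolution is identical for all $i$. Since cell zero's acceptance is determined by the left interface, $w = z_1 y z_2 \in L(S)$ iff $w_i = z_1 y x^i z_2 \in L(S)$.

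**Formalization.** I would set up two one-sided SCA computations: let $c_L = z_1 \cdot x^\omega$ and $c_R = x^\omega \cdot z_2$ (appropriately as $\domega$-words, with inactive cells to the far left of $z_1$ and far right of $z_2$), and track $\Delta_\Shr^s$ on each. Define the ``left frontier'' at step $s$ as the position of the leftmost cell that still originates from (or is influenced by) the $x^\omega$ tail, and similarly a ``right frontier.'' The hypothesis that $x^{2t+1} \to \eps$ in $t$ steps implies the $x$-region is net-deleted at an amortized rate bounded below, so the ``gap'' between the two frontiers in $z_1 x^{2t+1+i} z_2$ closes — and when it does, the merged configuration is independent of $i$ because the interiors were all pure $x$ and evolved identically. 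The careful bookkeeping is to confirm that adding $i$ more $x$'s only delays the meeting without altering the states of the surviving cells at the meeting moment.

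**Main obstacle.** The subtle point is the dynamic re-indexing of SCAs: deletions reconnect cells, so ``distance between the two interfaces'' is not a static quantity, and one must rule out that inserting extra $x$'s changes \emph{how} the left interface evolves before the interfaces meet. The light-cone monotonicity (deletions can only shrink dependencies, never create new long-range influence) is what saves this, but stating and using it cleanly for $\Delta_\Shr$ — as opposed to ordinary $\Delta$ — requires care, since a cell's neighbors can change identity between steps. I expect the bulk of the proof effort to go into a lemma of the form: ``the state and fate of the $j$-th cell from the left boundary after $s$ steps of $\Delta_\Shr$ depends only on $z_1$ and on $\min(s, \text{something})$ many leading $x$'s,'' and symmetrically on the right, plus the observation that these two local views are disjoint until all intervening $x$-cells are gone.
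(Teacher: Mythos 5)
Your high-level decomposition (left interface, right interface, uniformly evolving interior) matches the paper's, but the technical claim you lean on to handle the main obstacle is false, and this is a genuine gap. You assert a ``light-cone monotonicity'' for $\Delta_\Shr$ --- ``deletions can only shrink dependencies, never create new long-range influence'' --- but the opposite is true: deletion is exactly the mechanism by which SCAs create fast long-range influence (this is why they admit sublinear-time computation at all). If all cells between positions $a$ and $b$ delete themselves in one step, then cells $a-1$ and $b+1$ become neighbors immediately, so after two steps the state of cell $a-1$ depends on initial cells far beyond distance $2$. So the lemma you plan to spend ``the bulk of the proof effort'' on is unprovable as you envision it. What actually controls the light cone here is the \emph{minimality of $t$}: setting $\alpha_0 = x$ and $\alpha_{j+1} = \delta(\alpha_j,\alpha_j,\alpha_j)$, one has $\Delta_\Shr^j(x^{k+2j}) = \alpha_j^{k}$ as long as no $\alpha_{j'}$ with $j' \le j$ equals $\otimes$, and minimality forces $\alpha_j \neq \otimes$ for $j < t$ and $\alpha_t = \otimes$. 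Hence for every step $j < t$ the interior of the block $yx^i$ is a nonempty buffer of \emph{undeleted} cells in state $\alpha_j$ separating the two interfaces; only because this buffer never vanishes prematurely does the naive radius-$j$ dependency bound hold. Your proposal never identifies this as the reason, and without it the interface-separation argument has no justification.

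A second, related inaccuracy: you picture the two interfaces ``meeting later'' for larger $i$ and then argue the post-merge evolutions are time-shifted copies. In fact the entire interior is in state $\alpha_{t-1}$ at step $t-1$ and deletes \emph{simultaneously} at step $t$, for every $i \ge 0$, so $\Delta_\Shr^t(yx^i) = \eps$ at the \emph{same} step $t$ regardless of $i$; the paper concludes directly that $\Delta_\Shr^t(w) = \Delta_\Shr^t(w_i)$ as whole configurations, after which nothing more is needed. Under your ``delayed meeting'' picture you would additionally have to show that cell zero's trace in $w_i$ during the extra pre-merge window agrees with its trace in $w$ during the same (already merged) window --- a claim your setup does not provide and which you do not address. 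You would also need the paper's separate (easy) case for accepting times $t' < t$, handled via the common prefix $z_1 y$ together with $\Delta_\Shr^{t'}(y) \neq \eps$.
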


\begin{proof}
  Given $w$ and $i$ as above, we show $w_i \in L(S)$; the converse is trivial.
  Since $w$ and $w_i$ both have $z_1y$ as prefix and
  $\Delta_\Shr^{t'}(y) \neq \eps$ for $t' < t$, if $S$ accepts $w$ in $t'$
  steps, then it also accepts $w_i$ in $t'$ steps.
  Thus, assume $S$ accepts $w$ in $t' \ge t$ steps, in which case it suffices to
  show $\Delta_\Shr^t(w) = \Delta_\Shr^t(w_i)$.
  To this end, let $\alpha_j$ for $j \in [0,t]$ be such that $\alpha_0 = x$ and
  $\alpha_{j+1} = \delta(\alpha_j,\alpha_j,\alpha_j)$.
  Hence, $\Delta(\alpha_j^{k+2}) = \alpha_{j+1}^k$ holds for every $k \in \N_+$
  (and $j < t$) and, by an inductive argument as well as by the assumption on
  $y$ (i.e., $\alpha_t = \otimes$),
  $\Delta_\Shr^t(yx^i) = \Delta_\Shr^t(\alpha_0^{2t+i+1}) = \eps$.
  Using this along with $\abs{y} \ge t$ and $y \in \{ x \}^+$, we have
  $\Delta_\Shr^t(q^tz_1yx^i) = \Delta_\Shr^t(q^tz_1y)$ and
  $\Delta_\Shr^t(yx^iz_2q^t) = \Delta_\Shr^t(x^iyz_2q^t) =
  \Delta_\Shr^t(yz_2q^t)$;
  hence, $\Delta_\Shr^t(w) = \Delta_\Shr^t(w_i)$ follows.
\end{proof}

An implication of Lemma~\ref{lem_pumping_SCA} is that every unary language
$U \in \SCA[o(n)]$ is either finite or cofinite.
As $\PAR \cap \{ 1 \}^+$ is neither finite nor cofinite, we can prove:

\begin{proposition} \label{prop_parity}
  $\PAR \not\in \SCA[o(n)]$ (where $n$ is the input length).
\end{proposition}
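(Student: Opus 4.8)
The plan is to derive $\PAR \not\in \SCA[o(n)]$ as a direct consequence of Lemma~\ref{lem_pumping_SCA}, exploiting the fact that the pumping behavior it describes is inescapable for any SCA running in sublinear time. Suppose for contradiction that $S$ is an SCA with $L(S) = \PAR$ and time complexity $t(n) \in o(n)$. The first step is to extract a pumpable unary block. Consider the single-symbol input $x = 1$. By Lemma~\ref{lem_pumping_SCA}, the relevant question is whether there exists a minimal $t \in \N_+$ with $\Delta_\Shr^t(1^{2t+1}) = \eps$, i.e.\ whether the homogeneous configuration $1^{2t+1}$ collapses entirely under $t$ steps of the shrinking transition. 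I would argue this must happen for \emph{some} $t$: on a homogeneous block $\alpha_j^k$ the global transition acts as $\Delta(\alpha_j^{k+2}) = \alpha_{j+1}^k$ where $\alpha_{j+1} = \delta(\alpha_j, \alpha_j, \alpha_j)$, so the ``interior'' of a long unary region evolves through the purely syntactic sequence $\alpha_0, \alpha_1, \dots$ over the finite state set $Q$; since $S$ accepts some long all-$1$s word in sublinear (hence finite, and in particular sub-$n$) time, the only way cell zero can ever become active-and-then-accepting on such inputs — and thereby not loop forever in a fixed non-accepting homogeneous state — is for the unary region to shrink, which forces some $\alpha_j = \otimes$; taking the least such $j$ gives the desired $t$. (If no such $t$ existed, then $S$ would fail to accept all-$1$s words of even weight quickly enough, already contradicting $t(n) \in o(n)$, since without any deletions cell zero is at linear distance from the right end.)

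With such a $t$ in hand, fix it once and for all (it depends only on $S$, not on the input). The second step invokes Lemma~\ref{lem_pumping_SCA} with $y = 1^{2t+1}$, $z_1 = \eps$... — but note the lemma as stated requires $z_1, z_2 \in \Sigma^+$, so instead take $z_1 = 1$ and $z_2 = 1$, giving $w = 1 \cdot 1^{2t+1} \cdot 1 = 1^{2t+3}$, a word with an even number of $1$s when $2t+3$ is even, i.e.\ when $t$ is odd; to handle the parity of $t$ cleanly I would instead choose $z_1, z_2 \in \{1\}^+$ so that $\abs{z_1 y z_2} = \abs{z_1} + (2t+1) + \abs{z_2}$ is even, which is always arrangeable (e.g.\ $z_1 = z_2 = 1$ if $t$ is odd, $z_1 = 1, z_2 = 11$ if $t$ is even). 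Then $w \in \PAR = L(S)$, so by Lemma~\ref{lem_pumping_SCA} we get $w_i = z_1 y 1^i z_2 \in L(S)$ for \emph{every} $i \in \N_0$. But $w_1 = z_1 y 1 z_2$ has one more $1$ than $w$, hence an odd number of $1$s, so $w_1 \not\in \PAR$. This contradicts $w_1 \in L(S) = \PAR$.

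A cleaner way to package the argument, which I would actually prefer for the writeup, is to first record the intermediate observation already flagged in the text: \emph{every unary language $U \in \SCA[o(n)]$ is finite or cofinite.} Indeed, if $U$ is infinite it contains some word $1^N$ with $N$ large enough that $S$ accepts it in fewer than (roughly) $N/2$ steps; then the homogeneous-region analysis above produces a collapse time $t$ with $1^{2t+1}$ vanishing, and Lemma~\ref{lem_pumping_SCA} applied with $z_1, z_2$ chosen so $z_1 1^{2t+1} z_2$ is a member of $U$ shows that membership is insensitive to inserting arbitrarily many further $1$s, making $U$ cofinite (on a suitable arithmetic progression — the fine print here being exactly how Lemma~\ref{lem_pumping_SCA} partitions the long words, but in the purely unary case every sufficiently long word decomposes as $z_1 y 1^i z_2$). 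Since $\PAR \cap \{1\}^+ = \{1^{2j} : j \ge 1\}$ is neither finite nor cofinite, and $\PAR \cap \{1\}^+ = U \cap \{1\}^+$ for the unary language $U$ that $S$ induces, we get a contradiction.

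The main obstacle I anticipate is the first step: arguing rigorously that a pumpable collapsing block $1^{2t+1}$ \emph{must} exist for any sublinear-time SCA recognizing a nontrivial unary fragment. This requires carefully ruling out the case where $S$ accepts long all-$1$s words \emph{without} the unary region ever shrinking — which is where the sublinear time bound does the real work (no shrinking means cell zero sees only a bounded-distance signal in $o(n)$ time, so it cannot distinguish lengths and hence cannot decide parity). Formalizing ``the interior evolves through a syntactic sequence over $Q$'' and pinning down exactly when $\otimes$ must appear is the delicate part; once that lemma-within-a-lemma is established, the reduction to Lemma~\ref{lem_pumping_SCA} and the parity contradiction are routine. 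A secondary, purely cosmetic nuisance is matching the $z_1, z_2 \in \Sigma^+$ requirement and the parity of $2t+1$, handled by the padding choices indicated above.
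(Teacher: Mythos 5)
Your proof is correct and follows essentially the same route as the paper's: the paper splits into the same two cases (if $\Delta_\Shr^t(1^{2t+1}) = \eps$ for some $t$, then Lemma~\ref{lem_pumping_SCA} makes $L(S) \cap \{1\}^+$ finite or cofinite, contradicting $L(S) = \PAR$; otherwise the trace of cell zero on $1^{2t+3}$ and on $1^{2t+4}$ agrees for the first $t$ steps, forcing $\Omega(n)$ time), which is exactly your ``collapse must occur, then pump'' argument read in the contrapositive. One caution: your primary justification for the forced collapse (that cell zero can only become accepting if the unary region shrinks) is false as stated, since cell zero sits at the boundary and can accept with no shrinking at all, but your parenthetical length-indistinguishability argument is the correct one and is precisely what the paper uses.
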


\begin{proof}
  Let $S$ be an SCA with $L(S) = \PAR$.
  We show $S$ must have $\Omega(n)$ time complexity on inputs from the infinite
  set $U = \{ 1^{2m} \mid m \in \N_+ \} \subset \PAR$.
  If $\Delta_\Shr^t(1^{2t+1}) = \eps$ for some $t \in \N_0$, then, by
  Lemma~\ref{lem_pumping_SCA}, $L(S) \cap \{ 1 \}^+$ is either finite or
  cofinite, which contradicts $L(S) = \PAR$.
  Hence, $\Delta_\Shr^t(1^{2t+1}) \neq \eps$ for every $t \in \N_0$.
  In this case, the trace of cell zero on input $w = 1 1^{2t+1} 1$ in the first
  $t$ steps is the same as that on input $w' = 1 1^{2t+1} 11$.
  Since $w \in \PAR$ if and only if $w' \not\in \PAR$, it follows that $S$ has
  $\Omega(t) = \Omega(n)$ time complexity on $U$.
\end{proof}

\begin{corollary} \label{cor_SCA_neq_TISP}
  $\REG \not\subseteq \SCA[o(n)]$.
\end{corollary}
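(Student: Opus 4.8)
The plan is to exhibit a single regular language that is not accepted by any sublinear-time SCA, and $\PAR$ is the obvious witness. First I would observe that $\PAR \in \REG$: it is recognized by the two-state DFA that tracks the parity of the number of $1$s read so far (accepting in the even-parity state). On the other hand, Proposition~\ref{prop_parity} already gives $\PAR \not\in \SCA[o(n)]$. Combining these two facts yields a language in $\REG$ but outside $\SCA[o(n)]$, which is exactly $\REG \not\subseteq \SCA[o(n)]$.

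So there is essentially no further work: the content of the corollary lies entirely in Proposition~\ref{prop_parity}, which in turn rests on the pumping-type argument of Lemma~\ref{lem_pumping_SCA}. The only point worth making explicit is the reading of the class $\SCA[o(n)]$ as $\bigcup_{t \in o(n)} \SCA[t]$, so that ``$\PAR \not\in \SCA[o(n)]$'' means no SCA with $o(n)$ time complexity accepts $\PAR$; under this convention the deduction is immediate.

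For context I would also remark that this complements the earlier observation $\SCA[o(\log)] \subsetneq \REG$ together with the fact that $\SCA[t]$ already contains non-regular languages for $t \in \Omega(\log n)$: hence $\REG$ and $\SCA[o(n)]$ are in fact incomparable. The main obstacle, such as it is, was dealt with in proving Proposition~\ref{prop_parity} (and Lemma~\ref{lem_pumping_SCA}); nothing new arises at this step.
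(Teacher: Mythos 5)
Your proposal is correct and matches the paper's intent exactly: the corollary is stated immediately after Proposition~\ref{prop_parity} precisely because $\PAR$ is regular, so the witness and the deduction are the same. Nothing further is needed.
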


The argument above generalizes to $\MOD_q$, $\MAJ$, and $\THR_k$ with $k \in
\omega(1)$.
For $\MOD_q$, consider $U = \{ 1^{qm} \mid m \in \N_+ \}$.
For $\MAJ$ and $\THR_k$, set $U = \{ 0^m 1^m \mid m \in \N_+ \}$ and
$U = \{ 0^{n-k(n)} 1^{k(n)} \mid n \in \N_+ \}$, respectively; in this case, $U$
is not unary, but the argument easily extends to the unary suffixes of the words
in $U$.

\begin{corollary} \label{cor_mod_maj_thr}
  $\MOD_q, \MAJ \not\in \SCA[o(n)]$.
  Also, $\THR_k \in \SCA[o(n)]$ if and only if $k \in O(1)$.
\end{corollary}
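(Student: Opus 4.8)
The plan is to prove Corollary~\ref{cor_mod_maj_thr} by reducing each case to the machinery of Lemma~\ref{lem_pumping_SCA} and Proposition~\ref{prop_parity}, exactly as sketched in the paragraph preceding the statement. The overarching dichotomy we exploit is: either an SCA $S$ has $\Delta_\Shr^t(x^{2t+1}) = \eps$ for some $t$ and some relevant symbol $x$, in which case Lemma~\ref{lem_pumping_SCA} forces $L(S)$ to be insensitive to the length of long $x$-runs (so $L(S) \cap \{x\}^+$ is finite or cofinite, and more generally $L(S)$ cannot distinguish $z_1 y z_2$ from $z_1 y x^i z_2$); or $\Delta_\Shr^t(x^{2t+1}) \neq \eps$ for every $t$, in which case a run of length $2t+3$ and a run of length $2t+4$ look identical to cell zero for the first $t$ steps, giving an $\Omega(n)$ time lower bound on an appropriately chosen infinite family of inputs.

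First I would handle $\MOD_q$. Take the witness family $U = \{ 1^{qm} \mid m \in \N_+ \} \subseteq \MOD_q$ and the symbol $x = 1$. If $\Delta_\Shr^t(1^{2t+1}) = \eps$ for some $t$, then by Lemma~\ref{lem_pumping_SCA} applied with $z_1 = z_2 = 1$ (padding so that $z_1 1^{2t+1} z_2$ has length divisible by $q$, which is possible since we may also shift by choosing the run length), $L(S) \cap \{1\}^+$ would be finite or cofinite, contradicting that $\MOD_q \cap \{1\}^+$ is neither. Otherwise $\Delta_\Shr^t(1^{2t+1}) \neq \eps$ for all $t$; then on inputs $1 \cdot 1^{2t+1} \cdot 1^a$ for $a \in \{1, \dots, q\}$ the trace of cell zero agrees for the first $t$ steps across all $q$ choices of $a$, but exactly one of these $q$ words lies in $\MOD_q$ while the others do not, so $S$ needs more than $t$ steps on at least one of them; letting $t$ grow gives time complexity $\Omega(n)$ on a subfamily of $U$. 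The $\MAJ$ case is the same argument with the non-unary family $U = \{ 0^m 1^m \mid m \in \N_+ \}$: since the words we compare, $0^m 1 1^{2t+1} 1^a$ for suitable $m$ and $a \in \{0,1\}$, share the prefix through the long $1$-run, the pumping/trace reasoning applies to the unary $1$-suffix while the $0$-prefix is held fixed; one of the two lies in $\MAJ$ and the other does not, forcing $\Omega(n)$ time. (If instead $\Delta_\Shr^t(1^{2t+1}) = \eps$, Lemma~\ref{lem_pumping_SCA} lets us pump the $1$-run past the $0$-count without leaving/entering $\MAJ$ consistently, a contradiction.)

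For $\THR_k$ the statement is an equivalence. For the easy direction, if $k \in O(1)$ then $\THR_k$ is, up to a bounded threshold, essentially ``contains at least $c$ ones'' for a constant $c$, and a standard CA (no shrinking needed beyond $O(1)$ time) can check this: have each cell in state $1$ whose left neighbor is inactive or non-$1$ start a marker, let markers propagate and be counted modulo a constant-size counter carried leftward, reaching cell zero in $O(1)$ rounds of deletion (or simply observe $\THR_c$ is regular and invoke $\REG$-membership after noting the relevant witness structure — more carefully, one checks $\THR_c \in \CA[1] \subseteq \SCA[1]$ using the same self-deletion trick as in the $w(0)=w(|w|-1)$ example, collapsing the word and tallying up to $c$). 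For the hard direction, assume $k \in \omega(1)$; then take $U = \{ 0^{n - k(n)} 1^{k(n)} \mid n \in \N_+ \}$ and run the $\MAJ$-style argument on the unary $1$-suffix: comparing $0^{n-k(n)} 1^{k(n)}$ with $0^{n-k(n)} 1^{k(n)-1}$ (same prefix through most of the $1$-run), one is in $\THR_k$ and the other is not, and since $k(n) \to \infty$ the $1$-run is long enough that either Lemma~\ref{lem_pumping_SCA} yields a contradiction or the trace-agreement argument gives time $\Omega(k(n))$, which is $\Omega(n)$ on this family after reparametrizing $n$ appropriately; constructibility of $k$ in place is what guarantees such inputs are the kind SCAs are meant to handle and that the counting subroutine in the converse terminates on time.

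The main obstacle I anticipate is bookkeeping in the ``otherwise'' branch: making sure that the two (or $q$) inputs being compared genuinely share a long enough common prefix-region so that the first-$t$-steps trace of cell zero is literally identical, and that exactly the membership bit flips between them. For $\PAR$ this was clean because flipping one bit flips parity; for $\MOD_q$ one must append $q$ distinct tail lengths and argue combinatorially that not all can be non-members; for $\MAJ$ and $\THR_k$ one must be careful that the fixed $0$-prefix does not itself get deleted in a way that changes what cell zero sees — but since Lemma~\ref{lem_pumping_SCA}'s proof already isolates the behavior on $q^t z_1 y \cdots$ and on $\cdots y z_2 q^t$ separately, the same decomposition $\Delta_\Shr^t(w) = \Delta_\Shr^t(w_i)$ goes through verbatim with $z_1 = 0^{n-k(n)}$ (or $0^m$) held fixed. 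Beyond that, the details are routine given the lemmas already in hand.
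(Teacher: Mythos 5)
Your argument matches the paper's: it proves the corollary by the same dichotomy as Proposition~\ref{prop_parity}, using exactly the witness families $U=\{1^{qm}\}$, $\{0^m 1^m\}$, and $\{0^{n-k(n)}1^{k(n)}\}$ given in the paragraph preceding the statement and extending the trace/pumping reasoning to the unary $1$-suffixes. The only slip is labelling the constant-threshold upper bound as $\THR_c\in\CA[1]$ --- the self-deletion-and-tally construction you describe is inherently an SCA construction, so the containment should read $\THR_c\in\SCA[O(1)]$ (a non-shrinking $\CA[O(1)]$ cannot detect ones at unbounded distance from cell zero); this does not affect correctness.
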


The \emph{block versions} of these languages, however, are not subject to the
limitation above:

\begin{proposition} \label{prop_sca_par_etc}
  For $L \in \{ \PAR, \MOD_q, \MAJ \}$, $\Block_n(L) \in \SCA[(\log N)^2]$,
  where $N = N(n)$ is the input length.
  Also, $\Block_n(\THR_k) \in \SCA[(\log N)^2 + t_k(n)]$.
\end{proposition}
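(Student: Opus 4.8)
The goal is to show that for $L \in \{\PAR, \MOD_q, \MAJ\}$ we have $\Block_n(L) \in \SCA[(\log N)^2]$, and $\Block_n(\THR_k) \in \SCA[(\log N)^2 + t_k(n)]$, where $N \in \Theta(n \cdot 2^n)$ is the instance length (so $\log N \in \Theta(n)$). The natural approach is to exploit the block numbering together with shrinking: each block carries (in its upper track) its index $i$ in binary, so distinct blocks can behave differently even though the payload symbols are identical. I would recognize all four problems as special cases of \emph{counting} the number of blocks whose payload bit is $1$ (and, for $\MAJ$, comparing against the total count; for $\MOD_q$, reducing mod $q$; for $\THR_k$, comparing against $k(n)$, which first must be constructed in place in $t_k(n)$ time — this is the only place the extra $t_k(n)$ term enters). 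So the real content is: an SCA can compute $\abs{w}_1$ over a block word, encoded in binary, in $O((\log N)^2)$ time.

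\textbf{Key steps.} First, I would set up a blockwise reduction in the sense of Definition~\ref{def_bw_red} / Lemma~\ref{lem_bw_reduction_chain}, with $k = 2$ (pair up adjacent blocks), so that $r(n) \in O(\log m(n)) = O(n)$ rounds suffice to collapse $m(n)$ blocks down to a single block. The blockwise map $g'$ on a pair of blocks adds the two payload counts (each a binary number of at most $b(n) = n$ bits, padded in a block of length $b(n)$) and writes the sum back; since adding two $n$-bit numbers is a CA-in-place computation in $O(n)$ time, we get $t_{g'}(O(n)) \in O(n) = O(\log N)$. The block numbering has to be re-indexed after each merge so the result is again a valid block word — this is a routine local relabeling (halving indices), and I would just note it can be folded into $g'$ or done as a standard CA pass. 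Plugging into Lemma~\ref{lem_bw_reduction_chain}: $r(n) \cdot t_{g'}(O(k(n)b(n))) + O(b(n)) + t(b(n)) = O(n) \cdot O(n) + O(n) + O(n) = O(n^2) = O((\log N)^2)$, with $t(b(n))$ being the $O(b(n))$-time single-block test (is the count even / $\equiv 0 \bmod q$ / $\ge$ half / $\ge k(n)$), which for $\MAJ$ also needs the total $m$ — recoverable from the surviving block index or carried along as a second counter, and for $\THR_k$ needs $k(n)$ constructed in place, adding $t_k(n)$.

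\textbf{Alternative, possibly cleaner route.} Rather than invoking the blockwise-reduction lemma verbatim, one can describe the SCA directly: in phase $j$ ($j = 0, 1, \dots$), every block whose index has its low-order $j$ bits equal to those of an odd multiple of $2^{j-1}$ (i.e., the "right partner" of a pair at level $j$) ships its running count left to its partner and then deletes itself; after shipping, the partner adds the counts. Deleting half the active blocks each phase means the distance a count must travel in phase $j$ shrinks geometrically, so the total movement cost telescopes to $O(n)$ per bit of the counter and $O(n^2)$ overall; correctness uses the index arithmetic to guarantee partners are adjacent \emph{after} the previous phase's deletions. I'd present whichever is shorter, but the Lemma~\ref{lem_bw_reduction_chain} packaging is tidiest.

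\textbf{Main obstacle.} The subtle point — and the step I'd spend the most care on — is verifying that after each merge the intermediate object is genuinely an $(n, m/2, b)$-block word (indices consecutive, starting from a legal $x_0$, payloads of length exactly $b(n)$), so that the next reduction round's local verification still makes sense and the blocks still find their correct partners; with $k=2$ this is just halving the index, but one must check it interacts correctly with an arbitrary starting index $x_0$ and with the padding of the sum back to $b(n)$ bits (no overflow, since $\abs{w}_1 \le m(n) \le 2^n$ fits in $b(n) \ge n$ bits). The rest — in-place binary addition, comparison, and constructing $k(n)$ — are standard CA constructions cited from \cite{modanese20_sublinear-time_dlt, ibarra85_fast_tcs}.
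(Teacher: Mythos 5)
Your proposal is correct and follows essentially the same route as the paper's proof: a $2$-blockwise reduction that merges adjacent blocks by adding their binary payload counts and halving the index, iterated $O(n)$ rounds at $O(n)$ time each via Lemma~\ref{lem_bw_reduction_chain}, with a final $O(b(n))$-time single-block test (plus the in-place construction of $k(n)$ contributing the extra $t_k(n)$ term for $\THR_k$). The paper likewise assumes $m = 2^n$ first and handles the odd-block-count case by a minor adaptation, matching your remarks on re-indexing and padding.
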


\begin{proof}
  Given $L \in \{ \PAR, \MOD_q, \MAJ, \THR_k \}$, we construct an SCA $S$ for
  $L' = \Block_n(L)$ with the purported time complexity.
  Let $w \in \B^m_n$ be an input of $S$.
  For simplicity, we assume that, for every such $w$, $m = m(n) = 2^n$ is a
  power of two; the argument extends to the general case in a simple manner.
  Hence, we have $N = \abs{w} = n \cdot m$ and $n = \log m \in \Theta(\log N)$.

  Let $L_0 \subset \B^m_n$ be the language containing every such block word
  $w \in \B^m_n$ for which, for $y_i$ as in Definition~\ref{def_block_word} and
  $y = \sum_{i=0}^{m-1} y_i$, we have  $f_L(y) = f_{L,n}(y) = 0$, where
  $f_\PAR(y) = y \bmod 2$, $f_{\MOD_q}(y) = y \bmod q$, $f_\MAJ(y) = 0$ if and
  only if $y \ge 2^{n-1}$,  and $f_{\THR_k}(y) = 0$ if and only if $y \ge k(n)$.
  Thus, (under the previous assumption) we have $L_0 = L'$ (and, in the general
  case, $L_0 = L' \cap \B^{2^n}_n$).

  Then, $L_0$ is $2$-blockwise reducible to a language
  $L_1 \subseteq \B^{m/2}_n$ by mapping every $(n,2,n)$-block word of the form
  $\binom{\bin_n(2x)}{y_{2x}}\#\binom{\bin_n(2x+1)}{y_{2x+1}}$ with
  $x \in [0,2^{n-1} - 1]$ to $\binom{\bin_n(x)}{y_{2x} + y_{2x+1}}$.
  To do so, it suffices to compute $\bin_n(x)$ from $\bin_n(2x)$ and add the
  $y_{2x}$ and $y_{2x+1}$ values in the lower track; using basic CA arithmetic
  and cell communication techniques, this is realizable in $O(n)$ time.
  Repeating this procedure, we obtain a chain of languages $L_0, \dots, L_n$
  such that $L_i$ is $2$-blockwise reducible to $L_{i+1}$ in $O(n)$ time.
  By Lemma~\ref{lem_bw_reduction_chain}, $L' \in \SCA[n^2 + t(n)]$
  follows, where $t\colon \N_+ \to \N_0$ is such that $L_n \in \SCA[t]$.
  For $L \in \{ \PAR, \MOD_q, \MAJ \}$, checking the above condition on $f_L(y)$
  can be done in $t(n) \in O(n)$ time; as for $L = \THR_k$, we must also compute
  $k$, so we have $t(n) \in O(n + t_k(n))$.

  The general case follows from adapting the above reductions so that words with
  an odd number of blocks are also accounted for (e.g., by ignoring the last
  block of $w$ and applying the reduction on the first $m-1$ blocks).
\end{proof}

\subsection{An Optimal SCA Lower Bound for a Block Language}
\label{sec_linear_block_lang}

Corollary~\ref{cor_SCA_neq_TISP} already states SCAs are strictly less capable
than streaming algorithms.
However, the argument bases exclusively on long unary subwords in the input
(i.e., Lemma~\ref{lem_pumping_SCA}) and, therefore, does not apply to block
languages.
Hence Theorem~\ref{thm_linear_block_lang}, which shows SCAs are more limited
than streaming algorithms \emph{even considering only block languages}:

\restatethmlinearblocklang*

Let $L_1$ be the language of words $w \in \binalph^+$ such that $\abs{w} = 2^n$
is a power of two and, for $i = w(0) w(1) \cdots w(n-1)$ (seen as an $n$-bit
binary integer), $w(i) = 1$.
It is not hard to show that its block version $\Block_n(L_1)$ can be accepted
by an $O(\log m)$-space streaming algorithm with $\tilde{O}(\log m)$ update
time.

The $O(N/\log N)$ upper bound for $\Block_n(L_1)$ is optimal since there is an
$O(N / \log N)$ time SCA for it:
Shrink every block to its respective bit (i.e., the $y_i$ from
Definition~\ref{def_block_word}), reducing the input to a word $w'$ of
$O(N / \log N)$ length; while doing so, mark the bit corresponding to the $n$-th
block.
Then shift the contents of the first $n$ bits as a counter that decrements
itself every new cell it visits and, when it reaches zero, signals acceptance if
the cell it is currently at contains a $1$.
Using counter techniques as in \cite{stratmann02_leader_fgcs,
vollmar77_modified_ac}, this requires $O(\abs{w'})$ time.

The proof of Theorem~\ref{thm_linear_block_lang} bases on communication
complexity.
The basic setting is a game with two players $A$ and $B$ (both with unlimited
computational resources) which receive inputs $w_A$ and $w_B$, respectively, and
must produce an answer to the problem at hand while exchanging a limited amount
of bits.
We are interested in the case where the concatenation $w = w_A w_B$ of the
inputs of $A$ and $B$ is an input to an SCA and $A$ must output whether the SCA
accepts $w$.
More importantly, we analyze the case where \emph{only $B$ is allowed to send
messages}, that is, the case of \emph{one-way} communication.%
\footnote{One-way communication complexity can also been defined as the maximum
over \emph{both} communication directions (i.e., $B$ to $A$ and $A$ to $B$; see
\cite{durr04_cellular_tcs} for an example in the setting of CAs).
  Since our goal is to prove a \emph{lower bound} on communication complexity,
it suffices to consider a single (arbitrary) direction (in this case $B$ to
$A$).}

\begin{definition}[One-way communication complexity] \label{def_comm_compl}
  Let $m,f\colon \N_+ \to \N_+$ be functions with $0 < m(N) \le N$.
  A language $L \subseteq \Sigma^+$ is said to have \emph{($m$-)one-way
    communication complexity $f$} if there are families of algorithms (with
  unlimited computational resources) $(A_N)_{N \in \N_+}$ and $(B_N)_{N \in
    \N_+}$ such that the following holds for every $w \in \Sigma^\ast$ of length
  $\abs{w} = N$, where $w_A = w[0,m(N)-1]$ and $w_B = w[m(N),N-1]$:
  \begin{enumerate}
    \item $\abs{B_N(w_B)} \le f(N)$; and
    \item $A_N(w_A, B(w_B)) = 1$ (i.e., accept) if and only if $w \in L$.
  \end{enumerate}
  $\owComm^m(L)$ indicates the (pointwise) minimum over all such functions $f$.
\end{definition}

Note that $A_N$ and $B_N$ are nonuniform, so the length $N$ of the (complete)
input $w$ is known implicitly by both algorithms.

\begin{lemma} \label{lem_comm_compl_sca}
  For any computable $t\colon \N_+ \to \N_+$ and $m$ as in
  Definition~\ref{def_comm_compl}, if $L \in \SCA[t]$, then
  $\owComm^m(L)(N) \in O(t(N))$.
\end{lemma}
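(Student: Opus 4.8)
The plan is to turn a time‑$O(t)$ SCA $S$ for $L$ into a one‑way protocol by controlling how little information must cross the cut at position $m=m(N)$. Fix an input $w$ with $\abs{w}=N$, write $w_A=w[0,m-1]$, $w_B=w[m,N-1]$, let $T\in O(t(N))$ bound the running time of $S$ on $w$, and run $S$ on $w$. Call a cell \emph{left} if it descends from one of the initial cells $0,\dots,m-1$ and \emph{right} otherwise. Since $\Phi$ only contracts the configuration and never reorders cells, an induction on steps shows that at every step the surviving left cells form one contiguous block and the surviving right cells another, meeting at a single common boundary. For each step $\tau$ let $\alpha_\tau$ be the state of the rightmost surviving left cell (or $q$ if none remains) and $\beta_\tau$ the state of the leftmost surviving right cell (or $q$ if none remains). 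Because the two blocks see each other only through this boundary, the evolution of the left block --- in particular the whole trace of cell zero --- is a function of $w_A$ and $(\beta_0,\dots,\beta_{T-1})$, and symmetrically the right block's evolution is a function of $w_B$ and $(\alpha_0,\dots,\alpha_{T-1})$; and since a step‑$(\tau+1)$ configuration depends only on the step‑$\tau$ configuration, the sequences $(\alpha_\tau)$ and $(\beta_\tau)$ obey a causal mutual recursion and are determined jointly by $w_A$ and $w_B$.

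Given this, the protocol is: $B$ sends $A$ exactly enough about the right block that $A$, who holds $w_A$, can unroll this recursion to recover $\beta_0,\beta_1,\dots,\beta_{T-1}$ --- at step $\tau$, $A$ has already computed $\alpha_0,\dots,\alpha_{\tau-1}$ from $w_A$ and the earlier $\beta$'s, so it needs only the right block's response $\beta_\tau$ to that prefix --- and then $A$ replays the left block and accepts iff cell zero ever becomes accepting. The sequence $(\beta_\tau)_{\tau<T}$ carries only $O(T\log\abs{Q})=O(t(N))$ bits, which is the target; so the whole content of the lemma is that this reconstruction can be driven by a message from $B$ of the same order, i.e.\ that only $O(1)$ genuinely new bits about $w_B$ enter $A$'s computation per step. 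That yields $\owComm^m(L)(N)\in O(t(N))$.

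The main obstacle is to make this \enquote{$O(1)$ per step} claim precise in the presence of deletions. Two things must be handled. First, $\beta_\tau$ genuinely depends on $w_A$ --- e.g.\ $\beta_1=\delta(w(m-1),w(m),w(m+1))$ whenever the leftmost right cell survives --- so $B$ cannot simply precompute and transmit $(\beta_\tau)$; $B$ must send something from which $A$ reconstructs the $\beta_\tau$'s using its own knowledge of $w_A$. Second, because a single step can delete a long run of cells, the boundary can jump by many positions at once, so the part of $w_B$ that $A$ must replay to produce $\beta_\tau$ is not simply its first $O(\tau)$ cells, and a naive encoding of the right block's full interface behaviour is far too large. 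The heart of the proof is therefore a careful accounting --- exploiting that a deleted cell's contents are irrecoverably lost, so that across the boundary only one state per step ever flows into the left block --- showing that $A$'s reconstruction of $\beta_0,\dots,\beta_{T-1}$ consumes only $O(T)$ symbols' worth of information about $w_B$, which $B$ can supply in $O(t(N))$ bits; combined with the crossing‑sequence decomposition above, this gives the claimed bound.
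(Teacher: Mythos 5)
Your setup is right and you have correctly located the crux, but the proposal stops exactly where the proof has to begin: you never say what $B$ actually sends. You observe (correctly) that the true boundary states $\beta_\tau$ depend on $w_A$, so $B$ cannot transmit them; you observe (correctly) that deletions can make the boundary jump, so $B$ cannot simply transmit a length-$O(T)$ prefix of $w_B$ either. Having eliminated both naive messages, you then assert that \enquote{a careful accounting} shows $A$'s reconstruction consumes only $O(T)$ symbols' worth of information about $w_B$. That assertion \emph{is} the lemma; without an explicit message and an explicit reconstruction procedure for $A$, nothing has been proved.

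The paper's resolution is a concrete construction that sidesteps both of your obstacles at once. $B$ runs an \emph{autonomous} simulation of the right half padded with quiescent cells, $w_B^0 = w_B q^{2t(N)+2}$ and $w_B^{i+1} = \Delta_\Shr(w_B^i)$; since $\Delta$ applied to a finite word drops its endpoint cells, the one cell whose update would require knowledge of $w_A$ (the leftmost cell of $w_B^i$) is simply discarded from $B$'s simulation at each step, so the entire sequence $(w_B^i)_i$ is computable from $w_B$ alone. The message is the two leftmost states $w_B^i(0)\,w_B^i(1)$ of each $w_B^i$, i.e., $O(1)$ symbols per step --- these are \emph{not} your $\beta_\tau$'s, which is precisely why $B$ can compute them one-way. $A$ then maintains $w_A^{i+1} = \Delta_\Shr(w_A^i\, w_B^i(0)\, w_B^i(1))$, thereby taking over responsibility for the two handed-off cells (whose new states may be $\eps$ if they delete themselves), and the induction $w_A^i w_B^i = \Delta_\Shr^i(q^{2t(N)+2} w q^{2t(N)+2})$ shows the concatenation is the true configuration. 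Because the handoff is one cell per step \emph{in the current, post-shrink configuration}, the \enquote{boundary jumps many positions} problem you worry about never arises: a run of deleted right-half cells just means several of the absorbed states are $\eps$, and the bookkeeping is unaffected. To complete your write-up you would need to supply this (or an equivalent) construction and prove the corresponding invariant; as it stands, the argument is a correct problem statement rather than a proof.
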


The proof idea is to have $A$ and $B$ simulate the SCA for $L$ simultaneously,
with $A$ maintaining the first half $c_A$ of the SCA configuration and $B$ the
second half $c_B$.
(Hence, $A$ is aware of the leftmost active state in the SCA and can detect
whether the SCA accepts or not.)
The main difficulty is guaranteeing that $A$ and $B$ can determine the states of
the cells on the right (resp., left) end of $c_A$ (resp., $c_B$) despite the
respective local configurations \enquote{overstepping the boundary} between
$c_A$ and $c_B$.
Hence, for each step in the simulation, $B$ communicates the states of the two
leftmost cells in $c_B$; with this, $A$ can compute the states of all cells of
$c_A$ in the next configuration as well as that of the leftmost cell $\alpha$ of
$c_B$, which is added to $c_A$.
(See Figure~\ref{fig_sca_comm_compl_lem} for an illustration.)
This last technicality is needed due to one-way communication, which renders it
impossible for $B$ to determine the next state of $\alpha$ (since its left
neighbor is in $c_A$ and $B$ cannot receive messages from $A$).
As the simulation requires at most $t(N)$ steps and $B$ sends $O(1)$ information
at each step, this yields the purported $O(t(N))$ upper bound.

The attentive reader may have noticed this discussion does not address the fact
that the SCA may shrink; indeed, we shall also prove that shrinking does not
interfere with this strategy.

\begin{proof}
  Let $S$ be an SCA for $L$ with time complexity $O(t)$.
  Furthermore, let $Q$ be the state set of $S$ and $q \in Q$ its inactive state.
  We construct algorithms $A_N$ and $B_N$ as in Definition~\ref{def_comm_compl}
  and such that $\abs{B_N(w_B)} \le 2 \log(\abs{Q}) \cdot t(N)$.

  Fix $N \in \N_+$ and an input $w \in \Sigma^N$.
  For $w_B^0 = w_B q^{2t(N) + 2}$ and $w_B^{i+1} = \Delta_\Shr(w_B^i)$ for
  $i \in \N_0$, $B_N$ computes and outputs the concatenation
  \[
    B_N(w_B)
    = w_B^0(0) w_B^0(1) w_B^1(0) w_B^1(1) \cdots w_B^{t(N)}(0) w_B^{t(N)}(1).
  \]
  Similarly, let $w_A^0 = q^{2t(N) + 2} w_A$ and
  $w_A^{i+1} = \Delta_\Shr(w_A^i w_B^i(0) w_B^i(1))$ for $i \in \N_0$.
  $A$ computes $t(N)$ and $w_A^i$ for $i \in [0,t(N)]$ and accepts if there is
  any $j$ such that $w_A^i(j)$ is an accept state of $S$ and $w_A^i(j') = q$ for
  all $j' < j$; otherwise, $A$ rejects.

  To prove the correctness of $A$, we show by induction on $i \in \N_0$:
  $w_A^i w_B^i = \Delta_\Shr^i(q^{2t(n) + 2} w q^{2t(n) + 2})$.
  Hence, the $w_A^i(j)$ of above corresponds to the state of cell zero in
  step $i$ of $S$, and it follows that $A$ accepts if and only if $S$ does.
  The induction basis is trivial.
  For the induction step, let $w' = \Delta_\Shr(w_A^i w_B^i)$.
  Using the induction hypothesis, it suffices to prove
  $w_A^{i+1} w_B^{i+1} = w'$.
  Note first that, due to the definition of $w_A^{i+1}$ and $w_B^{i+1}$, we
  have $w' = \Delta_\Shr(w_A^i) \alpha \beta \Delta_\Shr(w_B^i)$, where
  $\alpha, \beta \in Q \cup \{ \eps \}$.
  Let $\alpha_1 = w_A^i(\abs{w_A^i} - 2)$, $\alpha_2 = w_A^i(\abs{w_A^i} - 1)$,
  and $\alpha_3 = w_B^i(0)$ and notice
  $\alpha = \delta(\alpha_1, \alpha_2, \alpha_3)$; the same is true for
  $\beta$ and $\beta_1 = \alpha_2$, $\beta_2 = \alpha_3$, and
  $\beta_3 = w_B^i(1)$.
  Hence, we have $w_A^{i+1} = \Delta_\Shr(w_A^i) \alpha \beta$, and the claim
  follows.
\end{proof}

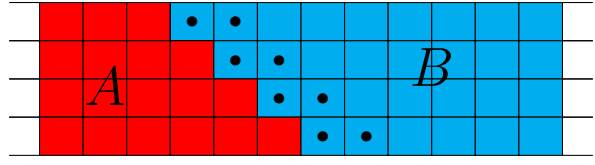
\begin{figure}[t]
  \centering
  \begin{tikzpicture}[every node/.style={block},
      block/.style={
        minimum width=width("$\otimes$") + 3mm,
        minimum height=height("$\otimes$") + 3mm,
        text height=1.5ex, text depth=.3ex, outer sep=0pt,
        draw, rectangle, node distance=0pt}]

    \setlength{\rowspc}{2em}
    \setlength{\lblspc}{2em}

    \tikzstyle{A}=[fill=red]
    \tikzstyle{B}=[fill=cyan]

    \node[A] (a0)               {};
    \node[A] (a1) [right=of a0] {};
    \node[A] (a2) [right=of a1] {};
    \node[B] (a3) [right=of a2] {$\bullet$};
    \node[B] (a4) [right=of a3] {$\bullet$};
    \node[B] (a5) [right=of a4] {};
    \node[B] (a6) [right=of a5] {};
    \node[B] (a7) [right=of a6] {};
    \node[B] (a8) [right=of a7] {};
    \node[B] (a9) [right=of a8] {};
    \node[B] (a10) [right=of a9] {};
    \node[B] (a11) [right=of a10] {};
    \draw (a0.north west) -- ++(-0.4cm,0)
          (a0.south west) -- ++(-0.4cm,0)
          (a11.north east) -- ++(0.4cm,0)
          (a11.south east) -- ++(0.4cm,0);


    \node[A] (b0) [below=of a0] {};
    \node[A] (b1) [right=of b0] {};
    \node[A] (b2) [right=of b1] {};
    \node[A] (b3) [right=of b2] {};
    \node[B] (b4) [right=of b3] {$\bullet$};
    \node[B] (b5) [right=of b4] {$\bullet$};
    \node[B] (b6) [right=of b5] {};
    \node[B] (b7) [right=of b6] {};
    \node[B] (b8) [right=of b7] {};
    \node[B] (b9) [right=of b8] {};
    \node[B] (b10) [right=of b9] {};
    \node[B] (b11) [right=of b10] {};
    \draw (b0.north west) -- ++(-0.4cm,0)
          (b0.south west) -- ++(-0.4cm,0)
          (b11.north east) -- ++(0.4cm,0)
          (b11.south east) -- ++(0.4cm,0);


    \node[A] (c0) [below=of b0] {};
    \node[A] (c1) [right=of c0] {};
    \node[A] (c2) [right=of c1] {};
    \node[A] (c3) [right=of c2] {};
    \node[A] (c4) [right=of c3] {};
    \node[B] (c5) [right=of c4] {$\bullet$};
    \node[B] (c6) [right=of c5] {$\bullet$};
    \node[B] (c7) [right=of c6] {};
    \node[B] (c8) [right=of c7] {};
    \node[B] (c9) [right=of c8] {};
    \node[B] (c10) [right=of c9] {};
    \node[B] (c11) [right=of c10] {};
    \draw (c0.north west) -- ++(-0.4cm,0)
          (c0.south west) -- ++(-0.4cm,0)
          (c11.north east) -- ++(0.4cm,0)
          (c11.south east) -- ++(0.4cm,0);


    \node[A] (d0) [below=of c0] {};
    \node[A] (d1) [right=of d0] {};
    \node[A] (d2) [right=of d1] {};
    \node[A] (d3) [right=of d2] {};
    \node[A] (d4) [right=of d3] {};
    \node[A] (d5) [right=of d4] {};
    \node[B] (d6) [right=of d5] {$\bullet$};
    \node[B] (d7) [right=of d6] {$\bullet$};
    \node[B] (d8) [right=of d7] {};
    \node[B] (d9) [right=of d8] {};
    \node[B] (d10) [right=of d9] {};
    \node[B] (d11) [right=of d10] {};
    \draw (d0.north west) -- ++(-0.4cm,0)
          (d0.south west) -- ++(-0.4cm,0)
          (d11.north east) -- ++(0.4cm,0)
          (d11.south east) -- ++(0.4cm,0);


    \node (Al) [draw=none] at (c1) {\huge $A$};
    \node (Bl) [draw=none] at (c8.north east) {\huge $B$};

  \end{tikzpicture}
  \caption{Simulating an SCA with low one-way communication complexity.
    (For simplicity, in this example the SCA does not shrink.)
    $B$ communicates the states of the cells marked with \enquote{$\bullet$}.
    The colors indicate which states are computed by each player.}
  \label{fig_sca_comm_compl_lem}
\end{figure}

We are now in position to prove Theorem~\ref{thm_linear_block_lang}.

\begin{proof}[Proof of Theorem~\ref{thm_linear_block_lang}]
  We prove that, for our language $L_1$ of before and $m(n) = n (n+1)$ (i.e.,
  $A_N$ receives the first $n$ input blocks),
  $\owComm^m(\Block_n(L_1))(N) \ge 2^n - n$.
  Since the input length is $N \in \Theta(n \cdot 2^n)$, the claim then follows
  from the contrapositive of Lemma~\ref{lem_comm_compl_sca}.

  The proof is by a counting argument.
  Let $A_N$ and $B_N$ be as in Definition~\ref{def_comm_compl}, and let $Y =
  \binalph^{2^n - n}$.
  The basic idea is that, for the same input $w_A$, if $B_N$ is given different
  inputs $w_B$ and $w_B'$ but $B_N(w_B) = B_N(w_B')$, then $w = w_A w_B$ is
  accepted if and only if $w' = w_A w_B'$ is accepted.
  Hence, for any $y, y' \in Y$ with $y \neq y'$, we must have $B_N(w_B) \neq
  B_N(w_B')$, where $w_B, w_B' \in \B^{2^n - n}_n$ are the block word versions
  of $y$ and $y'$, respectively; this is because, letting $j \in [0, 2^n-n]$ be
  such that $y(j) \neq y'(j)$ and $z = \bin_n(n+j)$, precisely one of the words
  $z y$ and $z y'$ is in $L_1$ (and the other not).
  Finally, note there is a bijection between $Y$ and the set $Y'$ of block words
  in $\B_n^{2^n - n}$ whose block numbering starts with $n + 1$ (i.e., $x_0 = n
  + 1$, where $x_0$  is as in Definition~\ref{def_block_word}) and with block
  entries of the form $a0^{n-1}$ where $a \in \binalph$ (i.e., $Y'$ is
  essentially the block version of $Y$ as in Definition~\ref{def_block_ver} but
  where we set $x_0 = n + 1$ instead of $x_0 = 0$).
  We conclude $\owComm^m(\Block_n(L_1))(N) \ge \abs{Y'} = \abs{Y} = 2^n - n$,
  and the claim follows.
\end{proof}



\section{Simulation of an SCA by a Streaming Algorithm}
\label{sec_sca_streaming}

In this section, we recall and prove:

\restatethmscatostreaming*

Before we state the proof, we first introduce some notation.
Having fixed an input $w$, let $c_t(i)$ denote the state of cell $i$ in step $t$
on input $w$.
Note that here we explicitly allow $c_t(i)$ to be the state $\otimes$ and also
disregard any changes in indices caused by cell deletion; that is, $c_t(i)$
refers to the \emph{same cell} $i$ as in the initial configuration $c_0$ (of
Definition~\ref{def_CA}; see also the discussion following
Definition~\ref{def_SCA}).
For a finite, non-empty $I = [a,b] \subseteq \Z$ and $t \in \N_0$, let
$\nndcl_t(I) = \max\{ i \mid i < a, c_t(i) \neq \otimes \}$ denote the nearest
non-deleted cell to the left of $I$; similarly,
$\nndcr_t(I) = \min\{ i \mid i > b, c_t(i) \neq \otimes \}$ is the nearest such
cell to the right of $I$.

\begin{proof}
  Let $S$ be an $O(t)$-time SCA for $L$.
  Using $S$, we construct a streaming algorithm $A$
  (Algorithm~\ref{alg_stream_sca}) for $L$ and prove it has the purported
  complexities.

  \begin{algorithm}[t]
    Compute $t(\abs{w})$\;
    Initialize lists $\mathtt{leftIndex}$, $\mathtt{centerIndex}$,
    $\mathtt{leftState}$, and $\mathtt{centerState}$\;
    $\mathtt{leftIndex}[0] \gets -1$;
    $\mathtt{leftState}[0] \gets q$\;
    $\mathtt{centerIndex}[0] \gets 0$;
    $\mathtt{centerState}[0] \gets w(0)$\;
    $\mathtt{next} \gets 1$\;
    $j_0 \gets 0$\;
    \For{$\tau \gets 0, \dots, t(\abs{w}) - 1$}{
      \nl\label{line_forloopstart}%
      $j \gets j_0$\;
      \nl\label{line_cond_read}%
      \uIf{$\mathtt{next} < \abs{w}$}{
        \nl\label{line_readnext}%
        $\mathtt{rightIndex} \gets \mathtt{next}$;
        $\mathtt{rightState} \gets w(\mathtt{next})$\;
        $\mathtt{next} \gets \mathtt{next} + 1$\;
      }
      \Else{
        \nl\label{line_incrementj0}%
        $\mathtt{rightIndex} \gets \abs{w}$;
        $\mathtt{rightState} \gets q$\;
        $j_0 \gets j_0 + 1$\;
      }
      \While{$j \le \tau$}{
        \nl\label{line_rprime_sprime}%
        $\mathtt{newRightIndex} \gets \mathtt{centerIndex}[j]$;
        $\mathtt{newRightState} \gets \delta(\mathtt{leftState}[j],
        \mathtt{centerState}[j], \mathtt{rightState})$\;
        $\mathtt{leftIndex}[j] \gets \mathtt{centerIndex}[j]$;
        $\mathtt{leftState}[j] \gets \mathtt{centerState}[j]$\;
        $\mathtt{centerIndex}[j] \gets \mathtt{rightIndex}$;
        $\mathtt{centerState}[j] \gets \mathtt{rightState}$\;
        $\mathtt{rightIndex} \gets \mathtt{newRightIndex}$;
        $\mathtt{rightState} \gets \mathtt{newRightState}$\;
        \nl\label{line_goto}%
        \lIf{$\mathtt{rightState} = \otimes$}{\Goto \ref{line_forloopstart}}
        $j \gets j + 1$\;
      }
      $\mathtt{leftIndex}[\tau+1] \gets -1$;
      $\mathtt{leftState}[\tau+1] \gets q$\;
      $\mathtt{centerIndex}[\tau+1] \gets \mathtt{rightIndex}$;
      $\mathtt{centerState}[\tau+1] \gets \mathtt{rightState}$\;
      \nl\label{line_acceptif}%
      \lIf{$\mathtt{centerState}[\tau+1] = a$}{\Accept}
    }
    \Reject\;
    \caption{Streaming algorithm $A$}
    \label{alg_stream_sca}
  \end{algorithm}

  \paragraph{Construction.}

  Let $w$ be an input to $A$.
  To decide $L$, $A$ computes the states of the cells of $S$ in the time steps
  up to $t(\abs{w})$.
  In particular, $A$ sequentially determines the state of the leftmost active
  cell in each of these time steps (starting from the initial configuration)
  and accepts if and only if at least one of these states is accepting.
  To compute these states efficiently, we use an approach based on dynamic
  programming, reusing space as the computation evolves.

  $A$ maintains lists $\mathtt{leftIndex}$, $\mathtt{leftState}$,
  $\mathtt{centerIndex}$, and $\mathtt{centerState}$ and which are indexed by
  every step $j$ starting with step zero and up to the current step $\tau$.
  The lists $\mathtt{leftIndex}$ and $\mathtt{centerIndex}$ store cell indices
  while $\mathtt{leftState}$ and $\mathtt{centerState}$ store the states of the
  respective cells, that is,
  $\mathtt{leftState}[j] = c_j(\mathtt{leftIndex}[j])$ and
  $\mathtt{centerState}[j] = c_j(\mathtt{centerIndex}[j])$.

  Recall the state $c_{j+1}(y)$ of a cell $y$ in step $j + 1$ is determined
  exclusively by the previous state $c_j(y)$ of $y$ as well as the states
  $c_j(x)$ and $c_j(z)$ of the left and right neighbors $x$ and $z$
  (respectively) of $y$ in the previous step $j$ (i.e., $x = \nndcl_j(y)$ and
  $z = \nndcr_j(y)$).
  In the variables maintained by $A$, $x$ and $c_j(x)$ correspond to
  $\mathtt{leftIndex}[j]$ and $\mathtt{leftState}[j]$, respectively, and $y$ and
  $c_j(y)$ to $\mathtt{centerIndex}[j]$ and $\mathtt{centerState}[j]$,
  respectively.
  $z$ and $c_j(z)$ are not stored in lists but, rather, in the variables
  $\mathtt{rightIndex}$ and $\mathtt{rightState}$ (and are determined
  dynamically).
  The cell indices computed (i.e., the contents of the lists
  $\mathtt{leftIndex}$ and $\mathtt{centerIndex}$ and the variables
  $\mathtt{rightIndex}$ and $\mathtt{newRightIndex}$) are not actually used by
  $A$ to compute states and are inessential to the algorithm itself; we use
  them only to simplify the proof of correctness below (and, hence, do not count
  them towards the space complexity of $A$).

  In each iteration of the \KwSty{for} loop, $A$ determines
  $c_{\tau+1}(z_0^\tau)$, where $z_0^\tau$ is the leftmost active cell of $S$
  in step $\tau$, and stores it $\mathtt{centerState}[\tau + 1]$.
  $\mathtt{next}$ is the index of the next symbol of $w$ to be read (or
  $\abs{w}$ once every symbol has been read), and $j_0$ is the minimal time step
  containing a cell whose state must be known to determine $c_{\tau+1}(z_0^t)$
  and remains $0$ as long as $\mathtt{next} < \abs{w}$.
  Hence, the termination of $A$ is guaranteed by the finiteness of $w$, that is,
  $\mathtt{next}$ can only be increased a finite number of times and, once all
  symbols of $w$ have been read (i.e., the condition in
  line~\ref{line_cond_read} no longer holds), by the increment of $j_0$ in
  line~\ref{line_incrementj0}.

  In each iteration of the \KwSty{while} loop, the algorithm starts from a local
  configuration in step $j$ of a cell $y = \mathtt{centerIndex}[j]$ with left
  neighbor $x = \mathtt{leftIndex}[j] = \nndcl_j(y)$ and right neighbor
  $z = \mathtt{rightIndex}[j] = \nndcl_j(y)$.
  It then computes the next state $c_{j+1}(y)$ of $y$ and sets $y$ as the new
  left cell and $z$ as the new center cell for step $j$.
  As long as it is not deleted (i.e., $c_{j+1}(y) \neq \otimes$), $y$ then
  becomes the right cell for step $j + 1$.
  In fact, this is the only place (line~\ref{line_goto}) in the algorithm where
  we need to take into consideration that $S$ is a shrinking (and not just a
  regular) CA.
  The strategy we follow here is to continue computing states of cells to the
  right of the current center cell (i.e., $y = \mathtt{centerIndex}[j]$) until
  the first cell to its right which has not deleted itself (i.e., $\nndcr_j(y)$)
  is found.
  With this non-deleted cell we can then proceed with the computation of the
  state of $\mathtt{centerIndex}[j+1]$ in step $j + 1$.
  Hence, if $y$ has deleted itself, to compute the state of the next cell to
  its right we must either read the next symbol of $w$ or, if there are no
  symbols left, use quiescent cell number $\abs{w}$ as right neighbor in step
  $j_0$, computing states up until we are at step $j$ again (hence the
  \KwSty{goto} instruction).

  \paragraph{Correctness.}

  The following invariants hold for both loops in $A$:
  \begin{enumerate}
    \item $\mathtt{centerIndex}[\tau] = \min\{ z \in \N_0 \mid c_\tau(z) \neq
    \otimes \}$,
    that is, $\mathtt{centerIndex}[\tau]$ is the leftmost active cell of $S$ in
    step $j$.
    \item If $j \le \tau$, then
    $\mathtt{rightIndex} = \nndcr_j(\mathtt{centerIndex}[j])$ and
    $\mathtt{rightState} = c_j(\mathtt{rightIndex})$.
    \item For every $j' \in [j_0, \tau]$:
    \begin{itemize}
      \item $\mathtt{leftIndex}[j'] = \nndcl_{j'}(\mathtt{centerIndex}[j'])$,
      \item $\mathtt{leftState}[j'] = c_{j'}(\mathtt{leftIndex}[j'])$; and
      \item $\mathtt{centerState}[j'] = c_{j'}(\mathtt{centerIndex}[j'])$.
    \end{itemize}
  \end{enumerate}
  These can be shown together with the observation that, following the
  assignment of $\mathtt{newRightIndex}$ and $\mathtt{newRightState}$ in
  line~\ref{line_rprime_sprime}, we have
  $\mathtt{newRightState} = c_{j+1}(\mathtt{newRightIndex})$ and, in case
  $\mathtt{newRightState} \neq \otimes$ and $j < \tau$, then also
  $\mathtt{newRightIndex} = \nndcr_j(\mathtt{centerIndex}[j+1])$.
  Using the above, it follows that after the execution of the \KwSty{while} loop
  we have $j = \tau + 1$, $\mathtt{rightState} \neq \otimes$, and
  $\mathtt{rightState} = c_{\tau + 1}(\mathtt{rightIndex})$.
  Since then $\mathtt{rightIndex} = \mathtt{centerIndex}[j-1] =
  \mathtt{centerIndex}[\tau]$, we obtain
  $\mathtt{rightIndex} = \min\{ z \in \N_0 \mid c_{\tau+1}(z) \neq \otimes \}$.
  Hence, as $\mathtt{centerState}[\tau + 1] = \mathtt{rightState} =
  c_{\tau+1}(\mathtt{rightIndex})$ holds in
  line~\ref{line_acceptif}, if $A$ then accepts, so does $S$ accept $w$ in step
  $\tau$.
  Conversely, if $A$ rejects, then $S$ does not accept $w$ in any step
  $\tau \le t(\abs{w})$.

  \paragraph{Complexity.}

  The space complexity of $A$ is dominated by the lists $\mathtt{leftState}$ and
  $\mathtt{centerState}$, which has $O(t(\abs{w}))$ many entries of $O(1)$ size.
  As mentioned above, we ignore the space used by the lists $\mathtt{leftIndex}$
  and $\mathtt{centerIndex}$ and the variables $\mathtt{rightIndex}$ and
  $\mathtt{newRightIndex}$ since they are inessential (i.e., if we remove them
  as well as all instructions in which they appear, the algorithm obtained is
  equivalent to $A$).

  As for the update time, note each list access or arithmetic operation costs
  $O(\log t(\abs{w}))$ time (since $t(\abs{w})$ upper bounds all numeric
  variables).
  Every execution of the \KwSty{while} loop body requires then
  $O(\log t(\abs{w}))$ time and, since, there are at most $O(t(\abs{w}))$
  executions between any two subsequent reads (i.e., line~\ref{line_readnext}),
  this gives us the purported $O(t(\abs{w}) \log t(\abs{w}))$ update time.

  Finally, for the reporting time of $A$, as soon as $i = \abs{w}$ holds after
  execution of line~\ref{line_readnext} (i.e., $A$ has completed reading its
  input) we have that the \KwSty{while} loop body is executed at most
  $\tau - j + 1$ times before line~\ref{line_readnext} is reached again.
  Every time this occurs (depending on whether line~\ref{line_readnext} is
  reached by the \KwSty{goto} instruction or not), either $j_0$ or both $j_0$
  and $\tau$ are incremented.
  Hence, since $\tau \le t(\abs{w})$, we have an upper bound of
  $O(t(\abs{w})^2)$ executions of the \KwSty{while} loop body, resulting (as
  above) in an $O(t(\abs{w})^2 \log t(\abs{w}))$ reporting time in total.
\end{proof}


\section{Hardness Magnification for Sublinear-Time SCAs}
\label{sec_sca_hm}

Let $K > 0$ be constant such that, for any function $s\colon \N_+ \to \N_+$,
every circuit of size at most $s(n)$ can be described by a binary string of
length at most $\ell(n) = K s(n) \log s(n)$.
In addition, let $\bot$ denote a string (of length at most $\ell(n)$) such that
no circuit of size at most $s(n)$ has $\bot$ as its description.
Furthermore, let $\Merge[s]$ denote the following search problem (adapted from
\cite{mckay19_weak_stoc}):
\begin{description}
  \item[Given:]
    the binary representation of $n \in \N_+$, the respective descriptions
    (padded to length $\ell(n)$) of circuits $C_0$ and $C_1$ such that
    $\abs{C_i} \le s(n)$, and $\alpha, \beta, \gamma \in \binalph^n$ with
    $\alpha \le \beta \le \gamma < 2^n$.
  \item[Find:]
    the description of a circuit $C$ with $\abs{C} \le s(n)$ and such that
    $\forall x \in [\alpha, \beta-1]: C(x) = C_0(x)$ and
    $\forall x \in [\beta, \gamma-1]: C(x) = C_1(x)$; if no such $C$ exists or
    $C_i = \bot$ for any $i$, answer with $\bot$.
\end{description}
Note that the decision version of $\Merge[s]$, that is, the problem of
determining whether a solution to an instance $\Merge[s]$ exists is in
$\Sigma_2^p$.
Moreover, $\Merge[s]$ is Turing-reducible (in polynomial time) to a decision
problem very similar to $\Merge[s]$ and which is also in $\Sigma_2^p$, namely
the decision version of $\Merge[s]$ but with the additional requirement that the
description of $C$ admits a given string $v$ of length $\abs{v} \le s(n)$ as a
prefix.%
\footnote{This is a fairly common construction in complexity theory for
  reducing search to decision problems; refer to
  \cite{goldreich08_computational_book} for the same idea applied in other
  contexts.
}

We now formulate our main theorem concerning SCAs and MCSP:
\begin{theorem} \label{thm_sca_hm_explicit}
  Let $s\colon \N_+ \to \N_+$ be constructible in place by a CA in $O(s(n))$
  time.
  Furthermore, let $m = m(n)$ denote the maximum instance length of
  $\Merge[s]$, and let $f, g\colon \N_+ \to \N_+$ with $f(m) \ge g(m) \ge m$
  be constructible in place by a CA in $O(f(m))$ time and $O(g(m))$ space.
  Then, for $b(n) = \floor{g(m) / 2}$, if $\Merge[s]$ is computable in place by
  a CA in at most $f(m)$ time and $g(m)$ space, then the search version of
  $\Block_b(\MCSP[s])$ is computable by an SCA in $O(n \cdot f(m))$ time, where
  the instance size of the latter is in $\Theta(2^n \cdot b(n))$.
\end{theorem}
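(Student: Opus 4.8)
The plan is to realize the search version of $\Block_b(\MCSP[s])$ by an SCA that
repeatedly applies a blockwise reduction built from $\Merge[s]$, invoking
Lemma~\ref{lem_bw_reduction_chain} with $k = 2$ and $r = n$.
First I would set up the block structure: an instance of $\Block_b(\MCSP[s])$ is
a block word with $2^n$ blocks of length $b(n)$, the $i$-th block carrying the
value $\trta(f)(i)$ of the truth table together with the index $\bin_n(i)$ in
the upper track (as in Definition~\ref{def_block_ver}).
The key observation, as in \cite{allender10_amplifying_jacm} and
\cite{mckay19_weak_stoc}, is that deciding (and, in the search version,
producing) a minimum circuit for $f$ on $n$ variables can be done
\enquote{bottom up}: group the $2^n$ truth-table entries into $2^{n-1}$ adjacent
pairs, each pair being the truth table of a function on the corresponding
sub-cube; merge the two circuits for a pair into a single circuit for their
concatenation using $\Merge[s]$; repeat $n$ times until a single block remains,
which then holds the description of a circuit of size $\le s(n)$ computing $f$
(or $\bot$). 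This gives a chain $L_0, L_1, \dots, L_n$ of block languages with
$L_i \subseteq \B_b^{2^{n-i}}$, where $L_i$ records, in each block, the
description of a size-$\le s(n)$ circuit for the corresponding sub-truth-table,
and $L_i$ is $2$-blockwise reducible to $L_{i+1}$ by the map $g'$ that takes two
adjacent blocks $\binom{\bin_n(2x)}{v_{2x}}\#\binom{\bin_n(2x+1)}{v_{2x+1}}$,
reconstructs the index interval boundaries $\alpha,\beta,\gamma$ from $\bin_n$
arithmetic on the block numbers, interprets $v_{2x},v_{2x+1}$ as the circuit
descriptions $C_0, C_1$, runs the in-place $\Merge[s]$ procedure, and writes the
resulting description (padded to length $b(n)$) into the lower track of
$\binom{\bin_n(x)}{\cdot}$.

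Next I would check the parameters so that Lemma~\ref{lem_bw_reduction_chain}
applies with the claimed bounds. The input block length must be large enough to
hold a circuit description of length $\le \ell(n) = Ks(n)\log s(n)$ and to run
$\Merge[s]$ in place; the instance of $\Merge[s]$ handed to $g'$ has length
$O(\ell(n)) \le O(g(m))$ (it consists of $n$ in binary, two padded descriptions,
and $\alpha,\beta,\gamma$), and by hypothesis $\Merge[s]$ is computable in place
by a CA in $f(m)$ time and $g(m)$ space. Choosing $b(n) = \lfloor g(m)/2\rfloor$
ensures each pair of blocks provides $2b(n)+1 \ge g(m)$ cells, enough working
space; one must also confirm $b(n) \ge n$ and $b(n) \ge \ell(n)$, which follows
from $g(m) \ge m \ge \ell(n)$ and $m \ge n$. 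The map $g'$ is non-stretching onto
a block of length $b(n)$ (the output is a single circuit description, again
$\le \ell(n) \le b(n)$), and it is computable in place by an SCA in
$t_{g'}(O(b(n))) \in O(f(m))$ time — here one may need a constant-factor padding
of the block length, absorbed into $b$, so that the CA implementing $\Merge[s]$
fits; this is exactly the LBA-style space caveat discussed in the introduction.
With $k(n) = 2$, $r(n) = n$, $b(n) = \Theta(g(m))$, and $t(b(n)) \in O(b(n))$
for the trivial base case $L_n$ (a single block already holding the answer, so
the SCA only needs $O(b(n))$ time to synchronize and output it),
Lemma~\ref{lem_bw_reduction_chain} yields time
$n\cdot t_{g'}(O(2b(n))) + O(b(n)) + t(b(n)) \in O(n\cdot f(m))$, and the
instance length is $(b(n)+1)\cdot 2^n - 1 \in \Theta(2^n\cdot b(n))$, as
claimed.

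The main obstacle I anticipate is verifying correctness of the merging scheme at
the level of the search problem — in particular, that composing $n$ applications
of $\Merge[s]$ really does yield a circuit of size $\le s(n)$ for $f$ whenever
one exists. The subtlety is that $\Merge[s]$ may return $\bot$ at an
intermediate stage even though a global size-$s(n)$ circuit for $f$ exists,
because the \emph{intermediate} sub-functions might not admit size-$s(n)$
circuits (or, conversely, the greedy bottom-up merge might return $\bot$
spuriously). Handling this is precisely the role of the careful definition of
$\Merge[s]$ adapted from \cite{mckay19_weak_stoc}: it merges circuits over
\emph{index intervals} $[\alpha,\beta-1]$ and $[\beta,\gamma-1]$ rather than
over disjoint sub-cubes, so the circuit being built always operates on all $n$
input bits and agrees with $f$ on a growing prefix-interval of inputs; since any
global size-$s(n)$ circuit for $f$ is in particular a valid merge at every
stage, $\Merge[s]$ never needs to output $\bot$ along the way when $f \in
\MCSP[s]$, and a straightforward induction on the levels $i = 0,\dots,n$ shows
that block $x$ of $L_i$ holds a size-$\le s(n)$ circuit agreeing with $f$ on
$[x\cdot 2^i, (x+1)\cdot 2^i - 1]$, so that $L_n$'s single block holds the
desired output. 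The remaining steps — the $\bin_n$ arithmetic to recompute block
numbers and interval endpoints, the padding bookkeeping, and the
asynchronous-synchronization details — are routine CA constructions already
invoked in the proof of Proposition~\ref{prop_sca_par_etc} and in
Lemma~\ref{lem_bw_reduction_chain}, and I would treat them briefly.
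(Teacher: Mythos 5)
Your proposal is correct and follows essentially the same route as the paper: an initial translation of the truth-table bits into trivial constant circuits, followed by $n$ rounds of a $2$-blockwise reduction that runs the in-place CA for $\Merge[s]$ on adjacent block pairs (with $\alpha,\beta,\gamma$ recovered from the block numbers), assembled via Lemma~\ref{lem_bw_reduction_chain} generalized to search problems with $k=2$, $r=n$, and $b(n)=\lfloor g(m)/2\rfloor$. Your parameter accounting and your observation that merging over index intervals never spuriously produces $\bot$ correspond to the paper's correctness argument that the solution sets are preserved at every level.
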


We are particularly interested in the repercussions of
Theorem~\ref{thm_sca_hm_explicit} \emph{taken in the contrapositive}.
Since $\P = \NP$ implies $\P = \Sigma_2^p$, it also implies there is a poly-time
Turing machine for $\Merge[s]$; since a CA can simulate a Turing machine with no
time loss, for $m$ as above we obtain:

\restatethmscahm*

We now turn to the proof of Theorem~\ref{thm_sca_hm_explicit}, which follows
\cite{mckay19_weak_stoc} closely.
First, we generalize blockwise reductions (see Definition~\ref{def_bw_red}) to
search problems:

\begin{definition}[Blockwise reducible (for search problems)]
  \label{def_bw_red_search}
  Let $L$ and $L'$ be block languages that correspond to search problems $S$ and
  $S'$, respectively.
  Also, for an instance $x$, let $S(x)$ (resp., $S'(x)$) denote the set of
  solutions for $x$ under the problem $S$ (resp., $S'$).
  Then $L$ is said to be (\emph{$k$-})\emph{blockwise reducible} to $L'$ if
  there is a computable $k$-blockwise map $g\colon \B^{km}_b \to \B^m_b$ such
  that, for every $w \in \B^{km}_b$, we have $S(w) = S'(g(w))$.
\end{definition}

Notice Lemma~\ref{lem_bw_reduction_chain} readily generalizes to blockwise
reductions in this sense.

Next, we describe the set of problems that we shall reduce $\Block_b(\MCSP[s])$
to.
Let $r\colon \N_+ \to \N_+$ be a function.
There is a straightforward $1$-blockwise reduction from $\Block_b(\MCSP[s])$ to
(a suitable block version of) the following search problem $\Merge_r[s]$:
\begin{description}
  \item[Given:]
  the binary representation of $n \in \N_+$ and the respective descriptions
  (padded to length $\ell(n)$) of circuits $C_1, \dots, C_r$, where
  $\abs{C_i} \le s(n)$ for every $i$ and $r = r(n)$.
  \item[Find:]
  (the description of) a circuit $C$ with $\abs{C} \le s(n)$ and such that,
  for every $i$ and every $x \in [(i-1) \cdot 2^n / r, i \cdot 2^n / r - 1]$,
  $C(x) = C_i(x)$; if no such $C$ exists or $C_i = \bot$ for any $i$, answer
  with $\bot$.
\end{description}
In particular, for the reduction mentioned above, we shall use $r = 2^n$.
Evidently, $\Merge_r[s]$ is a generalization of the problem $\Merge[s]$ defined
previously and, more importantly, every instance of $\Merge_r[s]$ is simply a
concatenation of $r / 2$ many $\Merge[s]$ instances where $\alpha$, $\beta$, and
$\gamma$ are given implicitly.
Using the assumption that $\Merge[s]$ is computable by a CA in at most $f(m)$
time and $g(m)$ space, we can solve each such instance in parallel, thus
producing an instance of $\Merge_{r/2}[s]$ (i.e., halving $r$).
This yields a $2$-blockwise reduction from (the respective block versions of)
$\Merge_r[s]$ to $\Merge_{r/2}[s]$ (cnf.~the proof of
Proposition~\ref{prop_sca_par_etc}).
Using Lemma~\ref{lem_bw_reduction_chain} and that $\Merge_1[s]$ is trivial, we
obtain the purported SCA for $\Block_b(\MCSP[s])$.

\begin{proof}
  Let $n$ be fixed, and let $r = 2^n$.
  First, we describe the $1$-blockwise reduction from $\Block_b(\MCSP[s])$ to a
  block version of $\Merge_r[s]$ (which we shall describe along with the
  reduction).
  Let $T_a$ denote the (description of the) trivial circuit that is constant
  $a \in \binalph$, that is, $T_a(x) = a$ for every $x \in \binalph^n$.
  Then we map each block $\binom{\bin_n(x)}{y0^{b(n) - 1}}$ with
  $y \in \binalph$ to the block $\binom{\bin_n(x)}{T_y \pi}$, where
  $\pi \in \{ 0 \}^\ast$ is a padding string so that the block length $b(n)$ is
  preserved.
  (This is needed to ensure enough space is available for the construction; see
  the details further below.)
  It is evident this can be done in time $O(b(n))$ and (since we just translate
  the truth-table $0$ and $1$ entries to the respective trivial circuits) that
  the reduction is correct, that is, that every solution to the original
  $\Block_b(\MCSP[s])$ instance must also be a solution of the produced instance
  of (the resulting block version of) $\Merge_r[s]$ and vice-versa.

  Next, maintaining the block representation described above, we construct the
  $2$-blockwise reduction from the respective block versions of $\Merge_\rho[s]$
  to $\Merge_{\rho/2}[s]$, where $\rho = 2^k$ for some $k \in [1,n]$.
  Let $A$ denote the CA that, by assumption, computes a solution to an instance
  of $\Merge[s]$ in place in at most $f(m)$ time and $g(m)$ space.
  Then, for $j \in [0, \rho / 2 - 1]$, we map each pair
  $\binom{\bin_n(2j)}{C_0 \pi_0} \# \binom{\bin_n(2j+1)}{C_1 \pi_1}$ of blocks
  (where $\pi_0, \pi_1 \in \{ 0 \}^\ast$ again are padding strings) to
  $\binom{\bin_n(j)}{C \pi}$, where $\pi \in \{ 0 \}^\ast$ is a padding string
  (as above) and $C$ is the circuit produced by $A$ for
  $\alpha = 2j \cdot 2^n / \rho$, $\beta = (2j + 1) \cdot 2^n / \rho$,
  and $\gamma = (2j + 2) \cdot 2^n / \rho$.

  To actually execute $A$, we need $g(m)$ space (which is guaranteed by the
  block length $b(n)$) and, in addition, to prepare the input so it is in the
  format expected by $A$ (i.e., eliminating the padding between the two circuit
  descriptions and writing the representations of $\alpha$, $\beta$, and
  $\gamma$), which can be performed in
  $O(b(n)) \subseteq O(g(m)) \subseteq O(f(m))$ time.
  For the correctness, suppose the above reduces an instance of $\Merge_\rho[s]$
  with circuits $C_1, \dots, C_\rho$ to an instance of $\Merge_{\rho/2}[s]$
  with circuits $D_1, \dots, D_{\rho/2}$ (and no $\bot$ was produced).
  Then, a circuit $E$ is a solution to the latter if and only if $E(x) = D_i(x)$
  for every $i$ and
  $x \in [(i - 1) \cdot 2^n / (\rho / 2), i \cdot 2^n / (\rho / 2) - 1]$.
  Using the definition of $\Merge[s]$, every $D_i$ must satisfy
  $D_i(x) = C_{2i-1}(x)$ and $D_i(y) = C_{2i}(y)$ for
  $x \in [(2i - 2) \cdot 2^n / \rho, (2i - 1) \cdot 2^n / \rho - 1]$ and
  $y \in [(2i - 1) \cdot 2^n / \rho, 2i \cdot 2^n / \rho - 1]$.
  Hence, $E$ agrees with $C_1, \dots, C_\rho$ if and only if it agrees with
  $D_1, \dots, D_{\rho/2}$ (on the respective intervals).

  Since $s(n) \ge n$ and $\Merge_1[s]$ is trivial (i.e., it can be accepted in
  $O(b(n))$ time), applying the generalization of
  Lemma~\ref{lem_bw_reduction_chain} to blockwise reductions for search problems
  completes the proof.
\end{proof}

\paragraph{Comparison with \cite{mckay19_weak_stoc}.}
We conclude this section with a comparison of our result and proof with
\cite{mckay19_weak_stoc}.
The most evident difference between the statements of
Theorems~\ref{thm_sca_hm}~and~\ref{thm_sca_hm_explicit} and the related result
from \cite{mckay19_weak_stoc} (i.e., Theorem~\ref{thm_mmw19}) is that our
results concern CAs (instead of Turing machines) and relate more explicitly to
the time and space complexities of $\Merge[s]$; in particular, the choice of the
block length is tightly related with the space complexity of computing
$\Merge[s]$.
As for the proof, notice that we only merge two circuits at a time, which makes
for a smaller instance size $m$ (of $\Merge[s]$); this not only simplifies the
proof but also minimizes the resulting time complexity of the SCA (as $f(m)$ is
then smaller).
Also, in our case, we make no additional assumptions regarding the first
reduction from $\Block_b(\MCSP[s])$ to $\Merge_r[s]$; in fact, this step can be
performed unconditionally.
Finally, we note that our proof renders all blockwise reductions explicit and
the connection to the self-reductions of \cite{allender10_amplifying_jacm} more
evident.
Despite these simplifications, the argument extends to generalizations of MCSP
with similar structure and instance size (e.g., MCSP in the setting of Boolean
circuits with oracle gates as in \cite{mckay19_weak_stoc} or MCSP for
multi-output functions as in \cite{ilango20_np-hardness_ccc}).


\section{Concluding Remarks}
\label{sec_conclusion}

\paragraph{Proving SCA Lower Bounds for $\MCSP[s]$.}
Recalling the language $L_1$ from the proof of
Theorem~\ref{thm_linear_block_lang}, consider the intersection $L_1[s] = L_1
\cap \MCSP[s]$.
Evidently, $L_1[s]$ is comparable in hardness to $\MCSP[s]$ (e.g., it is
solvable in polynomial time using a single adaptive query to $\MCSP[s]$).
By adapting the construction from the proof of Theorem~\ref{thm_sca_hm_explicit}
so the SCA additionally checks the $L_1$ property at the end in $\poly(s(n))$
time (e.g., using the circuit $C$ produced to check whether $C(x) = 1$ for $x =
C(0) \cdots C(n-1)$), we can derive a hardness magnification result for
$L_1[s]$ too:
If $\Block_b(L_1[s]) \not\in \SCA[\poly(s(n))]$ (for every $b \in \poly(s(n))$),
then $\P \neq \NP$.
Using the methods from Section~\ref{sec_linear_block_lang} and that there are
$2^{\Omega(s(n))}$ many (unique) circuits of size $s(n)$ or less,\footnote{%
  Let $K > 0$ be constant such that every Boolean function on $m$
  variables admits a circuit of size at most $K \cdot 2^m / m$.
  Setting $m = \floor{\log s(n)}$, notice that, for sufficiently large $n$ (and
  $s(n) \in \omega(1) \cap O(2^n / n)$), this gives us
  $s(n) \ge K \cdot 2^m / m$, thus implying that every Boolean function on
  $m \le n$ variables admits a circuit of size at most $s(n)$.
  Since there are $2^{2^m}$ many such (unique) functions, it follows there are
  $2^{\Omega(s(n))}$ (unique) circuits of size at most $s(n)$.
} this means that, if $\Block_b(L_1[s]) \in \SCA[t(n)]$ for some $b \in
\poly(n)$ and $t\colon \N_+ \to \N_+$, then $t \in \Omega(s(n))$.
Hence, for an eventual proof of $\P \neq \NP$ based on
Theorem~\ref{thm_sca_hm}, one would need to develop new techniques (see also
the discussion below) to raise this bound at the very least beyond
$\poly(s(n))$.

Seen from another angle, this demonstrates that, although we can prove a tight
SCA worst-case lower bound for $L_1$ (Theorem~\ref{thm_linear_block_lang}),
establishing similar lower bounds on instances of $L_1$ with low circuit
complexity (i.e., instances which are also in $\MCSP[s]$) is at least as hard as
showing $\P \neq \NP$.
In other words, it is straightforward to establish a lower bound for $L_1$ using
arbitrary instances, but it is absolutely non-trivial to establish similar lower
bounds for \emph{easy} instances of $L_1$ where instance hardness is measured in
terms of circuit complexity.

\paragraph{The Proof of Theorem~\ref{thm_sca_hm_explicit} and the Locality
Barrier.}
In a recent paper \cite{chen20_beyond_itcs}, \citeauthor{chen20_beyond_itcs}
propose the concept of a \emph{locality barrier} to explain why current lower
bound proof techniques (for a variety of non-uniform computational models) do
not suffice to show the lower bounds needed for separating complexity classes
in conjunction with hardness magnification (i.e., in our case above a
$\poly(s(n))$ lower bound that proves $\P \neq \NP$).
In a nutshell, the barrier arises from proof techniques relativizing with
respect to \emph{local aspects} of the computational model at hand (in
\cite{chen20_beyond_itcs}, concretely speaking, oracle gates of small fan-in),
whereas it is known that a proof of $\P \neq \NP$ must not relativize
\cite{baker75_relativizations_siamjc}.

The proof of Theorem~\ref{thm_sca_hm_explicit} confirms the presence of such a
barrier also in the uniform setting and concerning the separation of $\P$ from
$\NP$.
Indeed, the proof mostly concerns the construction of an SCA where the overall
computational paradigm of blockwise reductions (using
Lemma~\ref{lem_bw_reduction_chain}) is unconditionally compatible with the
SCA model (as exemplified in Proposition~\ref{prop_sca_par_etc}); the $\P = \NP$
assumption is needed exclusively so that the local algorithm for $\Merge[s]$
in the statement of the theorem exists.
Hence, the result also holds \emph{unconditionally} for SCAs that are, say,
augmented with oracle access (in a plausible manner, e.g., by using an
additional oracle query track and special oracle query states) to $\Merge[s]$.
(Incidentally, the same argument also applies to the proof of the hardness
magnification result for streaming algorithms (i.e., Theorem~\ref{thm_mmw19}) in
\cite{mckay19_weak_stoc}, which also builds on the existence of a similar
locally computable function.)
In particular, this means the lower bound techniques from the proof of
Theorem~\ref{thm_linear_block_lang} do not suffice since they extend to SCAs
having oracle access to any computable function.

\paragraph{Open Questions.}
We conclude with a few open questions:
\begin{itemize}
  \item By weakening SCAs in some aspect, certainly we can establish an
    unconditional MCSP lower bound for the weakened model which, were it to
    hold for SCAs, would imply the separation $\P \neq \NP$ (using
    Theorem~\ref{thm_sca_hm}).
    \emph{What forms of weakening} (conceptually speaking) are needed for these
    lower bounds?
    How are these related to the locality barrier discussed above?
  \item Secondly, we saw SCAs are strictly more limited than streaming
    algorithms.
    Proceeding further in this direction, can we identify
    \emph{further (natural) models of computation that are more restricted than
    SCAs} (whether CA-based or not) and for which we can prove results similar
  to Theorem~\ref{thm_sca_hm_explicit}?
  \item Finally, besides MCSP, what other (natural) problems admit similar SCA
    hardness magnification results?
    More importantly, can we identify some \emph{essential property} of these
    problems that would explain these results?
    For instance, in the case of MCSP there appears to be some connection to the
    length of (minimal) witnesses being much smaller than the instance length.
    Indeed, one sufficient condition in this sense (disregarding SCAs) is
    sparsity \cite{chen19_hardness_focs}; nevertheless, it seems rather
    implausible that this would be the sole property responsible for all
    hardness magnification phenomena.
\end{itemize}

\paragraph{Acknowledgments.}
\myack


\printbibliography

\end{document}